\tikzset{
font={\fontsize{7pt}{12}\selectfont},
every node/.style={scale=1.3},
square/.style={regular polygon,regular polygon sides=4},
    -Latex,auto,node distance =1 cm and 1 cm,semithick,
    state/.style ={ellipse, draw, minimum width = 0.7 cm},
    block/.style = {square, draw, inner sep=0cm,minimum size=8mm},
    free/.style = {circle, draw, inner sep=0cm,minimum size=6mm},
    bidirected/.style={Latex-Latex,dashed},
    el/.style = {inner sep=2pt, align=left, sloped}
}\usepackage[english]{babel}
\newtheorem{theorem}{Theorem}
  \theoremstyle{definition}
  \newtheorem{assumption}{}
  \theoremstyle{definition}
  \theoremstyle{definition}
  \newtheorem{example}{Example}[section]
  \theoremstyle{definition}
  \newtheorem*{example*}{Example}
  \theoremstyle{definition}
  \newtheorem{condition}{Condition}
\newtheorem{lemma}{Lemma}
\newtheorem{definition}{Definition}
\renewcommand{\P}{\mathsf{P}}
\newcommand{\F}{\mathsf{F}}
\newcommand{\Rem}{\mathsf{R}}
\newcommand{\m}{\mathsf{m}}
\newcommand{\g}{\mathsf{g}}
\newcommand{\G}{\mathsf{G}}
\newcommand{\indep}{\mbox{$\perp\!\!\!\perp$}}
\newcommand{\D}{\mathsf{D}}
\newcommand{\one}{\mathds{1}}
\newcommand{\E}{\mathsf{E}}
\renewcommand{\P}{\mathsf{P}}
\newcommand{\Ec}{\mathbb{E}}
\newcommand{\1}{\mathbbm{1}}
\newcommand{\supp}{\mathop{\mathrm{supp}}}
\DeclareMathOperator*{\argmin}{\arg\!\min}
\DeclarePairedDelimiterX{\norm}[1]{\lVert}{\rVert}{#1}
\renewenvironment{proof}{{\it Proof }}{\qed \\}
\tikzset{
%Define standard arrow tip
>=stealth',
%Define style for boxes
punkt/.style={
rectangle,
rounded corners,
draw=black, very thick,
text width=6.5em,
minimum height=2em,
text centered},
% Define arrow style
pil/.style={
->,
thick,
shorten <=2pt,
shorten >=2pt,}
}
\newcommand{\Vertex}[3]% pos, name
{\node[minimum width=0.6cm,inner sep=0.05cm] (#2) at (#1) {#3};
% \node[circle,draw,minimum width=0.6cm,inner sep=0] (#2) at (#1) {};
% \node[rounded corners=3pt,below,draw=black,fill=white,inner sep=1.5pt] at (#2.south) {\footnotesize#2};
}
\newcommand{\Vertexr}[3]% pos, name
{\node[rectangle, draw, minimum width=0.6cm,inner sep=0.05cm] (#2) at (#1) {#2};
% \node[circle,draw,minimum width=0.6cm,inner sep=0] (#2) at (#1) {};
% \node[rounded corners=3pt,below,draw=black,fill=white,inner sep=1.5pt] at (#2.south) {\footnotesize#2};
}
\newcommand{\ArrowR}[3]%
{ \begin{pgfonlayer}{background}
\draw[->,#3] (#1) to[bend right=30] (#2);
\end{pgfonlayer}
}
\newcommand{\ArrowLW}[3]%
{ \begin{pgfonlayer}{background}
\draw[->,#3] (#1) to[bend left=30] (#2);
\end{pgfonlayer}
}
\newcommand{\ArrowL}[3]%
{ \begin{pgfonlayer}{background}
    \draw[->,#3] (#1) to[bend left=45] (#2);
  \end{pgfonlayer}
}
\newcommand{\EdgeL}[3]%
{ \begin{pgfonlayer}{background}
\draw[dashed,#3] (#1) to[bend right=-45] (#2);
\end{pgfonlayer}
}
\newcommand{\Arrow}[3]%
{ \begin{pgfonlayer}{background}
\draw[->,#3] (#1) -- +(#2);
\end{pgfonlayer}
}
\newcommand{\ArrowB}[3]%
{ \begin{pgfonlayer}{background}
    \draw[|-arcs,line width=0.4mm,shorten <= 0.3cm,shorten >= 0.3cm,#3] (#1) -- +(#2);
  \end{pgfonlayer}
}
\newcommand{\GATT}{GATT\xspace}
\newcommand{\GATTs}{GATTs\xspace}
\date{\today}
\newcommand{\figurestables}{1}
    \date{}
    \author[1,*]{Herbert Susmann}
    \author[2]{Nicholas T. Williams}
    \author[2]{Kara E. Rudolph}
    \author[1]{Iv\'an D\'iaz}
    \date{}
    \affil[1]{\small Division of Biostatistics, Department of Population
      Health, New York University Grossman School of Medicine, New York, NY, USA}
    \affil[2]{\small Department of Epidemiology, Mailman School of Public Health, Columbia University, New York, NY, USA.}
    \affil[*]{Corresponding author: susmah01@nyu.edu}
    \author{}
\title{Longitudinal Generalizations of the Average Treatment Effect on the Treated for Multi-valued and Continuous Treatments}
\begin{document}
\maketitle
\begin{abstract}
The Average Treatment Effect on the Treated (ATT) is a common causal parameter defined as the average effect of a binary treatment among the subset of the population receiving treatment. We propose a novel family of parameters, Generalized ATTs (\GATTs), that generalize the concept of the ATT to longitudinal data structures, multi-valued or continuous treatments, and conditioning on arbitrary treatment subsets. We provide a formal causal identification result that expresses the \GATT in terms of sequential regressions, and derive the efficient influence function of the parameter, which defines its semi-parametric efficiency bound. Efficient semi-parametric inference of the \GATT requires estimating the ratios of functions of conditional probabilities (or densities); we propose directly estimating these ratios via empirical loss minimization, drawing on the theory of Riesz representers. Simulations suggest that estimation of the density ratios using Riesz representation have better stability in finite samples. Lastly, we illustrate the use of our methods to evaluate the effect of chronic pain management strategies on the development of opioid use disorder among Medicare patients with chronic pain. %Implementations of the proposed estimators are freely available in the open source \texttt{lmtp} R package (\url{https://github.com/nt-williams/lmtp}). 
\end{abstract}

\begin{keywords}
causal inference; modified treatment policies; Riesz representers; targeted minimum loss-based estimation
\end{keywords}

\ifx\arxiv\undefined 
    \clearpage
\fi

\section{Introduction}

Many causal estimands of scientific interest are defined in terms of contrasts of marginal means of counterfactual outcomes under different treatment assignments, averaged over a population. However, in many cases it is also of interest to study the effect of a treatment only among those who received it \citep{heckman2001fourparameters}. For instance, in single time point settings with a binary treatment the well-known Average Treatment Effect (ATE) is the expected difference in counterfactual outcomes under treatment vs.~under control averaged over the entire population, while the Average Treatment Effect on the Treated (ATT) averages only over the subpopulation that received the treatment. In more detail, with $A \in \{0, 1\}$ a binary treatment variable and $Y(0)$ and $Y(1)$ representing counterfactual outcomes under each treatment, the ATT is defined as
 \(   \E[Y(1) - Y(0) \mid A = 1]\). 
As causal effects analogous to the ATE can be defined in more complex data structures and for more complex interventions, so can the ATT be generalized. For example, in a simple longitudinal setting with two time points where $A_1$, $A_2$ are binary treatment indicators at times $1$ and $2$ and $Y(a_1, a_2)$ is the counterfactual outcome under treatment assignments $a_1$ and $a_2$, one possible generalization of the ATT is the parameter given by
 \(   \E[Y(1, 1) - Y(0, 0) | A_1 = 1]\). 
That is, the parameter is the difference in counterfactual outcomes under treatment at both time points vs. no treatment at both time points among the population who received the treatment at the first time point.
Further generalizations are also possible: for example, we may be interested in alternative treatment trajectories, or in conditioning on alternative treatment subsets. Going farther, we may further generalize to longitudinal structures with more time points, and to treatments that are multi-valued or continuous.

In order to cover a large range of possible interventions of interest defined for arbitrary longitudinal structures, we work within the framework of Longitudinal Modified Treatment Policies (LMTPs), which cover a broad class of causal effects for continuous, binary, and time to event outcomes \citep{robins2004effects, Diaz12, Haneuse2013,young2014identification,diaz2023lmtp, hoffman2023introducinglmtps,diaz2024causal}. LMTPs are defined as the population expected counterfactual outcome under an intervention %defined as a modified treatment policy (MTP). An MTP intervention 
that changes the natural value of the treatment (that is, the treatment an individual would have received under no intervention; \citealt{young2014identification}) according to a user-given function, which is called a Modified Treatment Policy (MTP). For example, an MTP for a continuous treatment could be defined as a function that shifts the natural value of the treatment multiplicatively or additively by a fixed amount. The novelty of MTPs compared to dynamic treatment rules, which define treatment as a function of time-varying covariates and prior treatment status, is that MTPs are allowed to depend on the natural value of the treatment, therefore enabling analysts to define diverse causal effects for complex (continuous, multivariate, time-to-event, etc.) treatments \citep{hoffman2023introducinglmtps}.

We draw on the LMTP framework to propose a novel class of causal parameters, referred to as Generalized ATTs (\GATTs), defined as the expected counterfactual outcome under an MTP intervention conditional on the natural value of treatment falling in an arbitrary conditioning set. The ATT for a time point treatment is a notable member of the class of \GATT parameters. Crucially, the \GATT is conditional on the longitudinal natural value of treatment: the counterfactual treatments that an individual would have received at each time point had they followed the treatment policy up to that time. Conditioning on the longitudinal natural value of treatment, rather than the \textit{observed} treatment trajectory, is crucial in establishing that the family of \GATT parameters are identifiable. Drawing on this distinction, we continue by showing that the \GATT is causally identifiable with an identification result that extends the g-formula in terms of sequential regressions. These results generalize the identification formulas for longitudinal parameters \citep{richardson2013single,young2014identification,diaz2023lmtp} and for the ATT \citep{luedtke2017sequential,Bang05,hubbard2011,heckman1995iv}.

We introduce several estimators of the \GATT, including an asymptotically normal and efficient estimator based on Targeted Minimum Loss-Based Estimation (TMLE). To begin, we derive simple substitution and probability (density) ratio estimators, the latter being similar in spirit to the well-known inverse probability weighted estimators (IPW) for parameters such as the ATE.
However, the performance of these simple estimators depends on consistent estimation of the corresponding nuisance parameters (the outcome and treatment assignment mechanisms, respectively). If data-adaptive methods are used for nuisance estimation, which is desirable in order to avoid making parametric assumptions, then the substitution and reweighting estimators have non-negligible asymptotic bias, and their sampling distributions are generally unknown. Therefore, we draw on tools from semi-parametric efficiency theory to construct novel estimators that are unbiased and asymptotically efficient even when using data-adaptive nuisance estimation algorithms. In particular, we derive the so-called \textit{von-Mises} expansion \citep{mises1947asymptotic} of the \GATT parameter, decomposing estimators thereof into first- and second-order bias terms. The first-order bias term is equal to the average of the parameter's canonical gradient, also known as its \textit{efficient influence function}, which characterizes the semi-parametric efficiency bound of regular asymptotically linear estimators of the \GATT \citep{Bickel97}. A major contribution of this work is to derive the form of the EIF for the \GATT (Theorem~\ref{thm:eif}). Consequently, we are able to construct an estimator that achieves the semi-parametric efficiency bound using TMLE \citep{vanderLaanRose11,hubbard2011}. The TMLE approach works by fluctuating initial nuisance parameter estimates, which may be derived from data-adaptive methods, in such a way that the resulting fluctuated estimates solve the the EIF estimating equation. 

The EIF of the \GATT, which subsequently determines the form of the TMLE estimator, has an additive structure in which terms for each time point are weighted by a probability ratio (or density ratio, for continuous treatments). We show that these weighting terms can be interpreted as \textit{Riesz representers}, connecting our methods to the Double Machine Learning literature \citep{chernozhukov2018dml}. The Riesz representer for each time point is the cumulative multiplication of all previous density ratios. One approach to estimating the required probability (density) ratios for these cumulative weights is to first estimate the required conditional probabilities (densities) and then multiply them together. However, this can lead to significant instability in the resulting estimators, especially in cases with high-dimensional or continuous treatments and in long longitudinal data structures. % where inverse probabilities (densities). 
\cite{diaz2023lmtp} applied an approach based on estimating the density ratios and subsequently multiplying them cumulatively, but this approach does not neatly generalize to \GATTs and can also suffer from instability in the resulting estimators. Thus, another major contribution is to propose a strategy that estimates the weights directly through empirical loss minimization \citep{chernuzhukov2022riesznet}, drawing on their interpretation as Riesz representers. This approach avoids the instabilities inherent when weights are formed by inverting and cumulating probabilities (densities).  The good performance of this approach is of wider general interest beyond estimating \GATTs, and applies to any estimator that requires inverse probability (density) weights or ratios, and especially to longitudinal settings involving cumulative ratios.

The overall goal of the present work is to propose an end-to-end methodology for the definition, identification, and estimation of the effects of modified treatment policies within strata of the population defined by values of the treatment.
Our main contributions are threefold. Our first contribution is to define a novel \GATT parameter in terms of modified treatment policies and establish causal identification results. Our second contribution is to derive key theoretical quantities such as the EIF of the \GATT parameter, which determines its semi-parametric efficiency bound. We then propose a semi-parametric efficient estimator based on TMLE and propose an estimation strategy involving empirical loss minimization based on Riesz representers, which is our third major contribution. The rest of the manuscript unfolds as follows: in the remainder of the introduction we discuss related prior work. In Section~\ref{section:causal-effects} we rigorously define the \GATT within a causal structural equation model and provide the causal identification result. In Section~\ref{section:efficiency} we analyze the \GATT statistical parameter in the framework of semi-parametric efficiency theory and derive its canonical gradient, its efficiency bound, and properties that we expect to hold for estimators based on the canonical gradient such as multiple robustness. In Section~\ref{section:estimation} we present an estimation strategy based on TMLE and direct estimation of the cumulated probability weights. The performance of the methods are investigated via simulation studies in Section~\ref{section:simulation}, and an applied example based on real world data is provided in Section~\ref{section:application}. We conclude with a discussion in Section~\ref{section:discussion}.

\subsection{Prior Work}
The ATT has been extensively studied in the case of a single time point and with binary treatments \citep{heckman1995iv, hahn1998role, shpitser2009, leacy2014att, wang2017, matsouaka2023att}. Doubly-robust estimators of the ATT have been proposed based on TMLE \citep{hubbard2011} and augmented inverse probability weighting \citep{moodie2018att}.  The ATT has also been extended to the case of categorical treatments \citep{vanderweele2013treatment}. In the longitudinal setting, parameters analogous to the ATT have been studied in the context of instrumental variable designs \citep{tchetgentchetgen2013att,liu2015msms} and in a more general setting using Marginal Structural Models \citep{schomaker2023doubly}. 

Longitudinal causal parameters based on modified treatment policies grew out of earlier work on defining and estimating causal effects in longitudinal settings and with complex interventions and non-binary treatments. \cite{Diaz12} proposed doubly robust estimators for causal effects defined via shift interventions for single time points. This work was further developed by \citep{Haneuse2013}, who introduced the term \textit{modified treatment policy}. From another angle, there was research on dynamic treatment regimes in longitudinal settings  \citep{robins2004effects, richardson2013single, young2014identification}.
These lines of research can be synthesized into a general framework for defining and estimating longitudinal causal effects, referred to as Longitudinal Modified Treatment Policies (LMTPs; \citealt{diaz2023lmtp}).

The definition of our causal parameters relies on the structural causal models of \cite{pearl2009}, but analogous definitions could have been achieved under a potential outcomes framework \citep{richardson2013single}. The  analysis of the properties of the statistical identification formula, as well as the development of estimators, builds on a long line of research in general semi-parametric efficiency theory \citep{bickel1982adaptive, vanderVaart&Wellner96}, and more recent results for longitudinal settings \citep{luedtke2017sequential, rotnitzky2017multiply}. The TMLE framework for constructing efficient estimators of parameters within semi-parametric models is described thoroughly in \cite{vdl2006targeted,vanderLaanRose11}. Our approach for estimating cumulated probability (density) ratios is inspired by recent research related to the use of Riesz representers in statistical estimation \citep{chernozhukov2021automatic, chernuzhukov2022riesznet,chernuzhukov2022debiased,chernozhukov2023automatic}.

\section{Causal Effects}
\label{section:causal-effects}
Let $Z = (L_1, A_1, L_2, A_2, \dots, L_\tau, A_\tau, Y)$ be a random variable where, for $t \in \{1, \dots, \tau\}$, $L_t$ is a vector of time-varying covariates, $A_t$ a vector of categorical, multivariate, or continuous treatment variables, and $Y$ a binary or continuous outcome. Suppose we have a sample $Z_1, \dots, Z_n$ of of i.i.d. draws $Z \sim \P_0$, where $\P_0$ falls in a non-parametric statistical model $\mathcal{M}$. Denote the history and future of a random variable as $\bar{X}_t = (X_1, \dots, X_t)$ and $\underline{X}_t = (X_t, \dots, X_\tau)$, respectively. For succinctness, we will write $\bar{X}$ and $\underline{X}$ to denote the complete history and future of a random variable ($\bar{X}_\tau$ and $\underline{X}_\tau$, respectively). Let $H_t = (\bar{A}_{t-1}, \bar{L}_t)$ be the history of all variables until right before $A_t$. Let $g_t(a_t, h_t)$ be the probability density or probability mass function of $A_t$ conditional on $H_t = h_t$, evaluated at $a_t$. We use $\supp\{\cdot\mid \mathcal B\}$ to denote the support of a random variable conditional on the set $\mathcal B$.

The causal model, which describes the data-generating process of the observed data and defines counterfactual outcomes, is formalized via a non-parametric structural equation model \citep{pearl2009}. Within this framework, we assume the observed data are generated according to the following deterministic functions, for $\{ 1, \dots, \tau \}$:
\begin{align}
    L_t &= f_{L_t}(A_{t-1}, H_{t-1}, U_{L,t}), \\
    A_t &= f_{A_t}(H_t, U_{A,t}), \\
    Y   &= f_Y(A_{\tau}, H_{\tau}, U_Y),
\end{align}
where $U = \left(U_{L, t}, U_{A, t}, U_Y : t \in \{ 1, \dots, \tau \} \right)$ is a set of exogenous variables. Note that the model implies a particular time-ordering of the variables: each $L_t$ happens before the corresponding $A_t$, and $Y$ occurs last.
\allowdisplaybreaks 
Interventions are defined by replacing $A_t$ with a new random variable $A_t^d$; we will define what this means in more detail shortly. An intervention $\bar{A}_{t-1}^d$ on all treatments from time 1 to time $t-1$ induces counterfactual variables $L_t(\bar{A}^d_{t-1}) = f_{L_t}(A_{t-1}^d, H_{t-1}(\bar{A}_{t-2}^d, U_{L,t})$ and $A_t(\bar{A}_{t-1}^d) = f_{A_t}(H_t(\bar{A}_{t-1}^d), U_{A, t})$, with counterfactual history defined as $H_t(\bar{A}_{t-1}^d) = (\bar{A}_{t-1}^d, \bar{L}_t(\bar{A}_{t-1}^d))$. The counterfactual variable $A_t(\bar{A}_{t-1}^d)=f_{A_t}(H_t(\bar{A}_{t-1}^d), U_{A,t})$ is referred to as the \textit{natural value of treatment} \citep{young2014identification}, and is interpreted as the value treatment would have taken had the intervention been implemented but discontinued right before time $t$. Intervening on all treatment variables up to time $t = \tau$ induces the counterfactual outcome $Y(\bar{A}^d) = f_Y(A_\tau^d, H_\tau(\bar{A}_{\tau - 1}^d), U_Y)$. We focus on LMTP interventions, which are a particular type of intervention in which $A_t^d$ is defined as a function of the natural value of treatment at time $t$ and the complete counterfactual history.\allowdisplaybreaks 

\begin{definition}{\cite[Definition 1,][]{diaz2023lmtp}}
The intervention $A_t^d$ is an LMTP if it has a representation $A_t^d = d(A_t(\bar{A}_{t-1}^d), H_t(\bar{A}_{t-1}^d))$ for an arbitrary function $d$.  
\end{definition}

Shift interventions, in which continuous treatments are shifted by a fixed $\delta$, are a popular example of an LMTP originally proposed in \cite{Diaz12} and discussed further in \cite{diaz2018stochastic}, \cite{Haneuse2013}, and \cite{hoffman2023introducinglmtps}. Shift interventions are carefully designed so that the shifted treatment cannot fall outside the range of empirically observed treatment values; this avoids the need for extrapolating counterfactual outcomes to treatments that have never been observed. A formal example of a positive shift intervention is given below.
\begin{example*}[Shift LMTP]
    Suppose there exists some $u_t$ such that $\P(A_t > u_t | H_t = h_t) = 1$ for all $t \in \{ 1, \dots, \tau \}$. For some fixed $\delta > 0$, define the intervention as
    \begin{align}
        d(a_t, h_t) = \begin{cases}
            a_t + \delta, &\text{ if } a_t \leq u_t(h_t) - \delta, \\
            a_t, &\text{ if } a_t > u_t(h_t) - \delta.
        \end{cases}
    \end{align}
\end{example*}
The counterfactual outcome induced by the above LMTP can be interpreted as the counterfactual outcome had every individual experienced a treatment shifted upwards by $\delta$, so long as the shifted treatment falls within the range of observed treatments; otherwise, the treatment remains at its natural value. 

Existing literature on LMTPs focuses on defining, identifying, and estimating population-level parameters such as the population mean counterfactual outcome under an LMTP, expressed as $\E[Y(\bar{A}^d)]$. However, no methodology currently exists for cases where parameter of interest is the \textit{conditional} mean counterfactual outcome of an LMTP for a subpopulation defined by treatment status. In the sequel, we present identification and estimation theory for parameters that condition on a subset of the treatment space and provide motivating examples where such parameters are of scientific relevance. 

\subsection{Definition of Causal Effects}
In this section we formalize the causal effect of an MTP intervention conditional on a set of treatment statuses. Our definition depends on the crucial concept of the longitudinal natural values of treatment, which we set apart below for emphasis:

\begin{definition}[]
    The vector $\bar A(d) = (A_1, A_2(d_1), ..., A_\tau(d_{\tau-1}))$ is called the longitudinal natural value of treatment.
\end{definition}
In other words, the vector $A(d)$ contains the counterfactual value that the treatment would have taken at each time point if the MTP had been followed up to but not including that time point. For the first time point, there are no previous treatments so there is no counterfactual natural value of treatment.

We are now ready to define the causal effect of interest. Let $\bar{\mathcal{B}} \subset \bar{\mathcal{A}}$ be a subset of the space of all possible longitudinal treatment assignments. 
Define the counterfactual outcome average conditional on the longitudinal natural value of treatment falling in $\bar{\mathcal{B}}$ as
\begin{align}
    \label{eq:gatt}
    \theta^* = \Ec\left\{ Y(\bar{A}^d) \mid \bar{A}(d) \in \bar{\mathcal{B}} \right\}.
\end{align}
We refer to $\theta^*$ as a \GATT, as it generalizes the single-time point ATT parameter to longitudinal data structures, multi-valued or continuous treatments, and interventions defined in terms of an MTP. 
The parameter reduces to an unconditional LMTP parameter as studied in prior literature if $\bar{\mathcal{B}} = \bar{\mathcal{A}}$ (that is, if the parameter conditions on the space of all possible treatment assignments).

Note that the parameter conditions on the \textit{counterfactual} treatment trajectory $\bar{A}(d)$, rather than on the \textit{observed} treatment trajectory $\bar{A}$; this is crucial to ensure that the parameter is identifiable. For intuition as to why this is the case, note that treatments $A_t$ for $t > 1$ are \textit{mediators} of the effect of $A_1$ on $Y$. If we were to instead define our parameter by conditioning on $\bar{A} \in \bar{\mathcal{B}}$ (that is, we condition on the \textit{observed} treatment trajectory falling in $\bar{\mathcal{B}}$) then the parameter would condition on the values of a mediator, which in the general case is not an identifiable causal parameter. 

When comparing two different policies $d^\star$ and $d'$ using contrasts between $\Ec[ Y(\bar{A}^{d^\star}) \mid \bar{A}(d^\star) \in \bar{\mathcal{B}}^\star ]$ and $\Ec[ Y(\bar{A}^{d'}) \mid \bar{A}(d') \in \bar{\mathcal{B}}' ]$ it is important to ensure that the sub-populations they are based on are comparable, i.e., that the regimes $d^\star$ and $d'$ and the sets $\bar{\mathcal{B}}'$ and $\bar{\mathcal{B}}^\star$ are such that $\bar{A}(d^\star) \in \bar{\mathcal{B}}^\star$ and $\bar{A}(d') \in \bar{\mathcal{B}}'$ if and only if $\bar{A}(d^\star) \in \bar{\mathcal{B}}^\star$. A simple sufficient condition for this to hold is given below; the principle is that is that as soon as the conditioning set begins to involve counterfactual natural values of treatment, the subpopulation induced by the conditioning set may diverge from the subpopulation induced by a different treatment policy. Condition~\ref{condition:comparable} avoids this scenario by requiring a trivial conditioning set as soon as the intervention begins. 
\begin{condition}[Sufficient condition for comparable subpopulations]
    \label{condition:comparable}
    There exists $1 \leq k \leq \tau$ such that $A_t(d) = A$ for all $t < k$ and $\mathcal{B}_t = \mathcal{A}_t$ for $t \geq k$. 
\end{condition}
Next, we give examples of \GATT parameters that can be of scientific interest in certain applications. 
\begin{example}[Single time point average treatment effect on the treated (ATT)]
    Assume the treatment $A$ is binary; for example, $A$ might indicate whether a patient is treated with a particular medication. It is often of interest to estimate the average treatment effect on the treated, given by $\E[Y(1) - Y(0)\mid A=1]$. The average outcome under treatment for the treated population, $\E[Y(1)\mid A=1]=\E[Y\mid A=1]$, can be easily identified and estimated. The average outcome under control for the treated population, $\E[Y(0)\mid A=1]$, is a true counterfactual quantity and requires extra work. In the form of a \GATT, letting $d(A,W)=0$ and $\mathcal B=\{1\}$, this quantity is given by $\theta^* = \E[Y(A^d)\mid A\in \mathcal B]$. Condition~\ref{condition:comparable} is satisfied with $k = 1$.
\end{example}

\begin{example}[Longitudinal ATT for two time points]
    Let $X = (L_1, A_1, L_2, A_2, Y)$ where $A_1$ and $A_2$ are binary. We have at least two options for well-defined conditional effects. First, let $d_t^\star(a_t, h_t) = 1$ and $d_t'(a_t, h_t) = 0$ for $t = 1, 2$. Then, ensuring that the contrast is a well defined causal effect can be achieved by defining $\mathcal{B}_1^\star = \mathcal{B}_1' = \{1\}$ and $\mathcal{B}_2^\star = \mathcal{B}_2' = \mathcal{A}$, yielding the contrast $\E[Y(1,1) - Y(0,0)\mid A_1=1]$ (satisfying Condition~\ref{condition:comparable} with $k = 1$). Alternatively, if $\mathcal{B}_t^\star = \mathcal{B}_t' = \{1\}$ for $t = 1, 2$, defining comparable contrasts can be achieved by letting $d_1^\star(a_1, h_1) = d_1'(a_1, h_1)=a_1$ and $d_2^\star(a_2, h_2) = 1$,  $d_2'(a_2, h_2) = 0$, yielding the contrast $\E[Y(1,1) - Y(1,0)\mid A_1=1, A_2=1]$ (satisfying Condition~\ref{condition:comparable} with $k = 2$). In the following, we focus on identifying and estimating the parameter $\E[Y(0, 0) \mid A_1 = 1]$ as an illustrative example.
\end{example}

\begin{example}[Longitudinal policy-relevant effects]
    For numerical treatments, it is often of scientific interest to investigate what would have been the outcome in a counterfactual world where treatment was increased by some user-given amount. For example, let $A$ denote the particulate matter PM2.5 that a given individual is exposed to. The Environmental Protection Agency sets the National Ambient Air Quality Standards for Particulate Matter to ``protect millions of Americans from harmful and costly health impacts, such as heart attacks and premature death'' \citep{EPA2023PM}. Current standards set the PM2.5 limit to 9 micrograms per cubic meter. Thus, any policy-relevant causal effect would have to take into account this standard. For instance, one may be interested in the effect of reducing PM2.5 on health outcomes by 10\% for geographical areas which are non-compliant with the standard at time $1$, i.e., one may define $\theta^*=\E[Y(d)\mid A_1 \in \mathcal{B}_1]$, where $\mathcal{B}_1 = \{ A_1 : A_1 > b \}$, $\mathcal{B}_t = \mathcal{A}$ for $t > 1$, $d(a_t,h_t) = \delta\times a_t$ with $b=9$, $\delta=0.9$, and $Y$ is a health outcome of interest, for example myocardial infarction. Note that Condition~\ref{condition:comparable} is satisfied with $k = 1$. 
\end{example}

\subsection{Causal Identification}
Causal identification of the \GATT parameter $\theta^*$ is achieved under the following two assumptions:
\begin{assumption}[Positivity]\label{assumption:positivity}
    For all $t \in \{ 1, \dots, \tau \}$, if $(a_t, h_t) \in \supp\{ A_t, H_t \mid A_t(d_{t-1}) \in \mathcal{B}_t \}$ then $(d(a_t, h_t), h_t) \in \supp\{A_t, H_t \}$.
\end{assumption}
\begin{assumption}[Strong sequential randomization]\label{assumption:strong-sequential-randomization} For all $t \in \{1, \dots, \tau \}$,
$U_{A,t} \indep (\underline{U}_{L,t+1}, \underline{U}_{A, t + 1}) | H_t$. 
\end{assumption}
 Positivity assumptions such as \ref{assumption:positivity} are a standard requirement for many causal parameters \citep{petersen2012diagnosing,vanderLaanRose11}. Assumption~\ref{assumption:positivity} is weaker than that required for population-averaged LMTP outcomes, as here positivity is only required conditional on $\mathcal{B}$. This has important implications in applications, as different choices of the set $\mathcal B$ can lead to assumptions with varying degrees of plausibility; see Example 2 below for the ATT, and discussion in \cite{petersen2012diagnosing}. The strong sequential randomization assumption requires that $A_t$, $L_{t+1}$, and $A_{t+1}$ are unconfounded conditional on observed past data (that is, all common causes of $A_t$, $L_{t+1}$, and $A_{t+1}$ are included in $H_t$). This is a stronger assumption that what is required to identify dynamic treatment rules that do not depend on the natural value of treatment. See \cite{diaz2023lmtp} for an in-depth discussion of these identification assumptions, and \cite{richardson2013single, young2014identification} for equivalent assumptions in alternative causal models. Next, we explain the identification assumptions for each of the running examples.
\setcounter{example}{0}
\begin{example}[continued]
    Consider the parameter contrast $\E[Y(1,1) - Y(0,0)\mid A_1=1]$.
    Assumption~\ref{assumption:positivity} requires that for any covariate value $l$ such that $\P(L = l \mid A = 1) > 0$, then it must hold that $\P(A = 0 \mid L = l) > 0$. This is weaker than the positivity assumption required for identification of the ATE, which states that for any covariate value $l$ such that $\P(L=l)>0$, then it must hold that $0<\P(A = 1 \mid L = l) <1$. Assumption~\ref{assumption:strong-sequential-randomization} requires that $U_{A} \indep U_Y | L$; that is, there are no unmeasured common causes of $A$ and $Y$. 
\end{example}
%\id{In the below example, we need to be explicit about what parameter this is for, as Example 2 above describes two potential parameters.}
\begin{example}[continued]
Recall that this example focuses on the contrast $\E[Y(1,1) - Y(0,0)\mid A_1=1]$, where the counterfactual requiring identification is $\E[Y(0, 0) \mid A_1 = 1]$. 
The positivity assumption (Assumption~\ref{assumption:positivity}) requires that if $\P(L_1 = l_1 | A_1 = 1) > 0$ then $\P(A_1 = 0 | L_1 = l_1) > 0$, and for any $a_2, h_2$ with $P(A_2 = a_2, H_2 = h_2) > 0$ then $P(0, h_2) > 0$. Assumption~\ref{assumption:strong-sequential-randomization} requires that $L_1$ includes all common causes of $A_1$, $A_2$, $L_2$, and $Y$ and that $L_2$ includes all common causes of $A_2$ and $Y$. 
\end{example}
\begin{example}[continued]
If we assume PM2.5 is measured in a discrete scale, assumption~\ref{assumption:positivity} requires that for all time points, for any pair ($a_t, h_t$) such that $\P(A_t = a_t \mid H_t = h_t, A_1 > b) > 0$, then it must hold that $\P(A_t = \delta a\mid H_t = h_t) > 0$. That is, if there are non-compliant (at baseline) geographical units with history $h_t$ that have a PM2.5 value of $a_t$, then there must also exist units with the same longitudinal history having a PM2.5 value of $\delta a_t$. Assumption~\ref{assumption:strong-sequential-randomization} requires that at each $t$, the longitudinal history $H_t$ includes all common causes of $A_t$ 
and ($A_s, L_s$), $s > t$. 
\end{example}
 
Under these assumptions, we establish an identification result for the \GATT parameter in terms of sequential regressions. 
\begin{theorem}
Let $m_{\tau + 1} = Y$, $A_{\tau + 1} = 1, \mathcal{B}_{\tau + 1} = \{ 1 \}$. In a slight abuse of notation, let $A_{t}^d = d(A_{t}, H_t)$.
Recursively define for $t = \tau, \dots , 1$ the parameters
\begin{align}
    \label{eq:m_t}
    m_t : (a_t, h_t) \mapsto \Ec\left[ m_{t+1}(A_{t+1}^d, H_{t+1}) | A_t = a_t, H_t = h_t, \underline{A}_{t+1} \in \underline{\mathcal{B}}_{t+1} \right],
\end{align}
and let $\theta_t = \Ec\left[ m_t(A_t^d, H_t) | \underline{A}_t \in \underline{\mathcal{B}}_t \right]$. Under \ref{assumption:positivity} and \ref{assumption:strong-sequential-randomization}, the \GATT parameter is identified as $\theta_1 = \Ec\left[ m_1(A_1^d, L_1) | \bar{A} \in \bar{\mathcal{B}} \right]$. 
\end{theorem}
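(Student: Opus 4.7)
The plan is to prove $\theta^*=\theta_1$ by backward induction on $t$, adapting the standard sequential g-formula argument for LMTPs \citep{diaz2023lmtp} to accommodate the additional conditioning on $\underline{\mathcal B}_t$. I would first introduce counterfactual analogs of the sequential regressions,
\[
\tilde m_t(a_t,h_t) = \Ec\left[Y(\bar A^d)\,\middle|\, A_t(\bar A^d_{t-1})=a_t,\; H_t(\bar A^d_{t-1})=h_t,\; \underline A_{t+1}(\bar A^d_t)\in\underline{\mathcal B}_{t+1}\right],
\]
together with $\tilde\theta_t = \Ec[Y(\bar A^d)\mid \underline A_t(\bar A^d_{t-1})\in\underline{\mathcal B}_t]$, so that $\tilde\theta_1 = \theta^*$ by definition. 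The tower property immediately yields the counterfactual recursions $\tilde\theta_t = \Ec[\tilde m_t(A_t(\bar A^d_{t-1}), H_t(\bar A^d_{t-1}))\mid \underline A_t(\bar A^d_{t-1})\in\underline{\mathcal B}_t]$ and a corresponding recursion expressing $\tilde m_t$ in terms of $\tilde m_{t+1}$. It then suffices to establish by backward induction that $\tilde m_t=m_t$ and $\tilde\theta_t=\theta_t$ for all $t$, from which the identification claim follows at $t=1$.

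The base case $\tilde m_{\tau+1}=m_{\tau+1}=Y$ is immediate by convention. For the inductive step, I would start from the recursion for $\tilde m_t$ and (i) rewrite $(L_{t+1}(\bar A^d_t), A_{t+1}(\bar A^d_t))$ via the structural equations in terms of $(U_{L,t+1},U_{A,t+1})$ and the counterfactual history $H_{t+1}(\bar A^d_t)$; (ii) invoke \ref{assumption:strong-sequential-randomization} iteratively to conclude that, conditional on $(A_t(\bar A^d_{t-1}),H_t(\bar A^d_{t-1}))=(a_t,h_t)$, the joint law of the future counterfactual trajectory agrees with the observed-data joint law of $(L_{t+1},A_{t+1},\dots,Y)$ given $(A_t,H_t)=(a_t,h_t)$ with the intervention $d$ applied from time $t+1$ onward; and (iii) invoke \ref{assumption:positivity} to guarantee that the densities involved are well-defined on the support induced by $\underline{\mathcal B}_{t+1}$. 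Substituting the inductive hypothesis $\tilde m_{t+1}=m_{t+1}$ and noting that $d(A_{t+1},H_{t+1})=A^d_{t+1}$ reconstructs the defining formula for $m_t(a_t,h_t)$, giving $\tilde m_t=m_t$. An analogous argument at the level of $\tilde\theta_t$ yields $\tilde\theta_t=\theta_t$ and hence the recursive identity at each $t$.

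I expect the main obstacle to be carefully translating the conditioning event $\{\underline A_{t+1}(\bar A^d_t)\in\underline{\mathcal B}_{t+1}\}$, which involves counterfactual natural values of treatment, into the observed-data event $\{\underline A_{t+1}\in\underline{\mathcal B}_{t+1}\}$ appearing in the definition of $m_t$. Unlike the standard unconditional LMTP identification proof, here the two events must deliver the same conditional probabilities, not merely the same marginal expectation of $Y$. The strategy will be to peel off the conditioning one time point at a time, at each step using \ref{assumption:strong-sequential-randomization} to show that the conditional distribution of $A_s(\bar A^d_{s-1})$ given $H_s(\bar A^d_{s-1})=h_s$ coincides with the observed conditional distribution of $A_s$ given $H_s=h_s$, so that in particular $\Pr\{A_s(\bar A^d_{s-1})\in\mathcal B_s\mid H_s(\bar A^d_{s-1})=h_s\} = \Pr\{A_s\in\mathcal B_s\mid H_s=h_s\}$, and then combining these equalities with the g-formula recursion in a single backward sweep. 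Positivity will ensure that all conditional probabilities along the way are strictly positive on the relevant support, so that the substitutions above are well-defined.
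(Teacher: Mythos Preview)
Your overall strategy---introducing counterfactual analogs $\tilde m_t,\tilde\theta_t$ and matching them to $m_t,\theta_t$ by backward induction---is a valid alternative to the paper's argument. The paper proceeds instead by a forward unrolling: it writes $\theta^*$ as a nested integral via iterated expectation, at each step invoking a separate conditional-independence lemma (namely that $Z_t(\bar a_t^d,\bar l_t)\,\one\{\underline A_{t+1}(d)\in\underline{\mathcal B}_{t+1}\}\indep A_t\mid H_t=h_t^d$ under \ref{assumption:strong-sequential-randomization}) to replace conditioning on $A_t=a_t$ by $A_t=a_t^d$, and only after reaching $t=\tau$ rolls the integral back up using the recursive definition of $m_t$. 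Your backward induction packages the same ingredients more modularly and makes the role of each assumption at each time point clearer; the paper's forward unrolling makes the shift $a_t\mapsto a_t^d$ and the way the conditioning on $\underline{\mathcal B}$ propagates through the integral more explicit.

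There is, however, an indexing slip in your inductive claim. Your $\tilde m_t(a_t,h_t)$ conditions on the \emph{natural} value $A_t(\bar A^d_{t-1})=a_t$; on that event $A_t^d=d(a_t,h_t)$, so $Y(\bar A^d)$ is a function of the \emph{shifted} treatment $d(a_t,h_t)$. At $t=\tau$ this yields (after passing to observed-data conditioning via \ref{assumption:strong-sequential-randomization} and \ref{assumption:positivity}) $\tilde m_\tau(a_\tau,h_\tau)=\Ec[Y\mid A_\tau=d(a_\tau,h_\tau),H_\tau=h_\tau]=m_\tau(d(a_\tau,h_\tau),h_\tau)$, not $m_\tau(a_\tau,h_\tau)$. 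The correct inductive hypothesis is therefore $\tilde m_t(a_t,h_t)=m_t(d(a_t,h_t),h_t)$, which is also what is needed to recover $\theta_t=\Ec[m_t(A_t^d,H_t)\mid\underline A_t\in\underline{\mathcal B}_t]$ from your recursion $\tilde\theta_t=\Ec[\tilde m_t(A_t(\bar A^d_{t-1}),H_t(\bar A^d_{t-1}))\mid\cdots]$. With this correction the backward induction goes through; as stated, the base case $\tilde m_\tau=m_\tau$ already fails for any nontrivial $d$.
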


While we will discuss estimation in depth later, we can gain intuition about the identification result for longitudinal settings by introducing a simple estimation strategy based on the sequential regression formulation of the identification result. In particular, $m_\tau$ can be estimated as a regression of $Y$ on $A_\tau$ and $H_\tau$. This regression can then be used to obtain predictions under the hypothetical modified treatment policy $A^d_\tau = d(A_\tau, H_\tau)$, which can in turn be regressed on $A_{\tau - 1}, H_{\tau - 1}$ among individuals with $A_{\tau} \in \mathcal{B}_{\tau}$ to yield an estimate of $m_{\tau-1}$. This procedure can be iterated to $t = 1$, which is averaged over individuals with $\bar{A} \in \bar{\mathcal{B}}$ to yield an point estimate of $\theta$. 

It is also helpful to understand the identification result in the running examples. In the single time point setting the identification result simplifies greatly, as shown in the running example for the ATT.
\setcounter{example}{0}
\begin{example}[continued]
    Recall the counterfactual causal parameter is given by $\theta^* = \E[Y(0) \mid A = 1]$. The identification result implies this parameter is identified by $\theta = \E[ \E[ Y \mid L, A = 0] \mid A = 1]$.
\end{example}
%\id{We should say which of the two parameters discussed above we are identifying below.}
\begin{example}[continued]
    For the counterfactual parameter given by $\E[Y(0, 0) \mid A_1 = 1]$, the identification result simplifies to 
    \begin{align}
        \theta = \E[\E\{\E(Y\mid A_2=0, H_2)\mid A_1=0, L_1 \}\mid A_1=1]. 
    \end{align}
\end{example}
\begin{example}[continued]
    First, define recursively the parameters 
    \begin{align}
        m_t: (a_t, h_t) \mapsto \E[m_{t+1}(\delta A_{t+1}, H_{t+1}) \mid A_t = a_t, H_t = h_t],
    \end{align}
    and recall that $m_{\tau + 1} = Y$. 
    The identification result then implies that $\theta = \E[m_1(\delta A_1, L_1) \mid A_1 > b]$. 
\end{example}

\section{Semi-parametric Efficiency Theory}
\label{section:efficiency}
In this section we investigate the semi-parametric properties of the \GATT parameter $\theta$. These results draw on a long literature in semi-parametric efficiency theory; see \cite{begun1983information, Bickel97, vanderVaart98}, among many others. \cite{kennedy2016efficiency} provides a useful review of the relevant theory specialized to causal inference applications. Our main result is to derive the \textit{efficient influence function} (EIF) of $\theta$, a key object in the semi-parametric analysis of the parameter. The EIF is foundational in the analysis for two reasons: first, the variance of the EIF defines the efficiency bound for estimating $\theta$ in the non-parametric model $\mathcal{M}$ \citep{Bickel97}. Second, knowledge of the EIF is crucial for developing efficient non-parametric estimators of $\theta$ and deriving important properties such as their asymptotic sampling distribution. Developing estimators using the EIF flows from the following expansion of the parameter, sometimes referred to as the \textit{von-Mises expansion} \citep{robins2009quadratic, mises1947asymptotic}: for any $\P, \F \in \mathcal{M}$, decompose $\theta(\P)$ as
\begin{align}
    \label{eq:von-mises}
    \theta(\P)  = \theta(\F) - \E_{\F}\{\D(Z; \P)\} + \Rem(\P, \F),
\end{align}
where $\D$ is the EIF and $\Rem(\P, \F)$ is a second-order term of products of differences between functionals of $\P$ and $\F$. Plugging in an estimate $\hat \P$ for $\P$ and setting $\F = \P_0$, and assuming that the second order estimation error $\Rem(\hat \P, \P_0)$ is small enough, one then obtains an approximation to the bias of a plug-in estimator $\theta(\hat \P)$, namely
$\theta(\hat \P) - \theta(\P_0) \approx -\E_{\P_0}\{\D(O;\hat \P)\}$. A biased corrected estimator may then be constructed by subtracting an estimate of this bias from the plug-in estimator (i.e., the so-called one-step estimator) \citep{pfanzagl1982contributions, emery2000}, or constructing an estimator $\hat{P}$ such that this bias converges to zero (i.e., targeted minimum loss-based estimation) \citep{vanderLaanRose11}. 

The EIF and von-Mises expansion can also be used to derive important properties of non-parametric estimators. The form of the EIF may imply multiple-robustness properties in which only combinations of the nuisance parameters need to be estimated consistently for the estimator of $\theta$ to be consistent. In addition, careful analysis of the form of the remainder term $\Rem(P, F)$ can reveal additional useful properties of non-parametric estimators of $\theta$. For common parameters of interest such as the Average Treatment Effect, the remainder term has a product structure that implies nuisance parameters can be estimated at slow $n^{-1/4}$ rates, allowing for the use of flexible machine-learning algorithms in estimation.

For continuous treatments, we require the following assumption originally introduced by by \cite{Haneuse2013} and common in the LMTP literature.
\begin{assumption}[Piecewise smooth invertibility for continuous treatments \citep{Haneuse2013}]
    \label{assumption:piecewise-invertibility}
    For all $t \in \{1, \dots, \tau \}$, and for all $h_t$, assume that the support of $A_t$ conditional on $H_t = h_t$ is partitionable into subintervals $\mathcal{I}_{t,j} : j = 1, \dots J_t(h_t)$ such that $d(a_t, h_t)$ is equal to some $d_j(a_t, h_t)$ and $d_j(\cdot, h_t)$ has an inverse function that is differentiable with respect to $a_t$. 
\end{assumption}

Before presenting the form of the EIF and the second-order remainder term, we define for convenience several additional nuisance parameters. First, the conditional probability of falling in the conditioning set is given by 
\begin{align}
    \label{eq:g_t}
    G_t(A_t, H_t) = P(\underline{A}_{t+1} \in \underline{\mathcal{B}}_{t+1} \mid A_t, H_t). 
\end{align}
Next, define the cumulated probability (density) ratio $\alpha_{t}$ for $t \in \{1, \dots, \tau \}$ as
\begin{align}
    \label{eq:alpha_t}
    \alpha_{t}(A_t, H_t)=\prod_{k=1}^t r_k(A_k, H_k),
\end{align}
with the ratio at time $t$ defined as
\begin{align}
    r_t(a_t, h_t) &= \frac{g_{t, \mathcal{B}}^d(a_t, h_t)}{g_t(a_t, h_t)},
\end{align}
where, for continuous $A_t$ we define \begin{align}
    g^d_{t,\mathcal B}(A_t\mid H_t) =& \sum_{j=1}^{J_t(H_t)}\frac{G_t(b_{j,t}(A_t, H_t), H_t)}{G_{t-1}(A_{t-1},H_{t-1})}\times \\ &\one\{b_{j,t}(A_t, H_t)\in \mathcal I_{j,t}(H_t), b_{j,t}(A_t, H_t)\in \mathcal B_t\} g_t(b_{j,t}(A_t,H_t)\mid H_t)|b_{j,t}'(A_t,H_t)|.
\end{align}
and where $b_{j,t}(\cdot, h_t)$ denotes the inverse of $d_t(\cdot, h_t)$ in $\mathcal I_{j,t}(h_t)$, and  $b_{j,t}'(\cdot,h_t)$ denotes its derivative. For for discrete $A_t$ we define
\begin{align}
    g^d_{t,\mathcal B}(A_t\mid H_t) = \sum_{a_t\in \mathcal B_t}\one\{A_t=d_t(a_t, H_t)\}\frac{G_t(a_t, H_t)}{G_{t-1}(A_{t-1},H_{t-1})}  g_t(a_t\mid H_t).
\end{align}

We now move on to give the form of the EIF and the second-order remainder term for the \GATT; a detailed derivation is provided in the appendix. Doing so requires expanding our previous notation to emphasize the dependence of the nuisance parameters on the underlying probability law. To that end, for any $\P \in \mathcal{M}$ let $m_{t,\P}$ be the parameter $m_{t}$ \eqref{eq:m_t}, $\G_{t,\P}$ be the parameter $G_t$ \eqref{eq:g_t}, and $\alpha_{t,\P}$ be the parameter $\alpha_t$ \eqref{eq:alpha_t}, all when evaluated at the distribution $\P$. 

\begin{theorem}[Efficient influence function]
\label{thm:eif}
    Assume that $d$ does not depend on $\P_0$, and that Assumption~\ref{assumption:piecewise-invertibility} holds in the case of continuous treatments. 
    Then parameter $\theta_1$ is pathwise differentiable and its EIF is given by 
    \begin{align}
        \D(Z; P) = \sum_{t=0}^\tau \alpha_{t,\P}(A_t, H_t) \frac{\1\{\underline{A}_{t+1} \in \underline{\mathcal{B}}_{t+1}\}}{G_{t,\P}(A_t, H_t)} \left\{ m_{t+1,\P}(A_{t+1}^d, H_{t+1}) - m_{t,\P}(A_t, H_t) \right\}.
    \end{align}
    The second-order remainder term is given by
    \begin{align}
        R(\P,\F) =& -\sum_{t=1}^\tau\E_\P[\{\alpha_{t,\P}(A_t, H_t)-\alpha_{t,\F}(A_t, H_t)\}\{m_{t,\P,\F}(A_t, H_t) - m_{t,\F}(A_t, H_t)\}]\\
        &-\sum_{t=1}^\tau\E_\P\left[\alpha_{t,\F}(A_t, H_t)\left\{1 - \frac{G_{t,\P}(A_t, H_t) }{G_{t,\F}(A_t, H_t) }\right\}\{m_{t,\P,\F}(A_t, H_t) - m_{t,\F}(A_t, H_t)\}\right].
    \end{align}
\end{theorem}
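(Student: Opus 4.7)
The plan is to prove the theorem by a guess-and-verify strategy based directly on the von-Mises expansion \eqref{eq:von-mises}. Taking the stated $\D(Z;\P)$ as a candidate, I would show that
\[
\theta(\P) - \theta(\F) + \E_\F[\D(Z;\P)] = \Rem(\P,\F)
\]
with $\Rem$ matching the formula in the statement. Because the remainder is bilinear in $\P-\F$ (a sum of products of two differences), this identity simultaneously establishes pathwise differentiability, identifies the gradient, and yields the second-order remainder. To conclude that $\D(Z;\P)$ is the \emph{efficient} influence function and not merely some gradient, I would verify that it lies in the nonparametric tangent space at $\P$, which follows summand by summand from the defining recursion of $m_{t,\P}$ combined with the identity $\E_\P[\1\{\underline{A}_{t+1}\in\underline{\mathcal{B}}_{t+1}\}\mid A_t,H_t] = G_{t,\P}(A_t,H_t)$.

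The core computation is to expand $\E_\F[\D(Z;\P)]$ as a sum over $t$ and manipulate each summand by iterated conditioning. For each $t$, the identity
\[
m_{t,\P}(a_t,h_t)\,G_{t,\P}(a_t,h_t) = \E_\P\!\left[\1\{\underline{A}_{t+1}\in\underline{\mathcal{B}}_{t+1}\}\, m_{t+1,\P}(A_{t+1}^d,H_{t+1}) \,\middle|\, A_t=a_t,\,H_t=h_t\right]
\]
lets me rewrite the $t$-th summand in a form compatible with averaging under $\F$. Next I would invoke the change-of-measure identity that the ratio $g^d_{t,\mathcal{B}}/g_t$ converts an $\F$-expectation under the natural treatment distribution into one under the MTP-induced distribution restricted to $\underline{A}_{t+1}\in\underline{\mathcal{B}}_{t+1}$; cumulated across time this produces the weight $\alpha_{t,\P}$. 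Telescoping across $t=1,\dots,\tau$ collapses most terms to $\theta(\F)-\theta(\P)$, and the residual discrepancies --- which arise because $\alpha_{t,\P}$ and $G_{t,\P}$ are $\P$-based while the outer expectation is under $\F$ --- group into the two families of double-product bias terms appearing in the statement.

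The main obstacle I expect is the bookkeeping around the cumulated weight. Unlike the unconditional LMTP setting where each $r_t$ involves only the densities $g_t$, here $r_t$ also carries the factor $G_{t,\P}/G_{t-1,\P}$ reflecting the conditioning on $\underline{A}_{t+1}\in\underline{\mathcal{B}}_{t+1}$. Tracking these $G$-factors through the telescoping is precisely what generates the second family of remainder terms $\E_\P[\alpha_{t,\F}(1 - G_{t,\P}/G_{t,\F})(m_{t,\P,\F} - m_{t,\F})]$, which has no analogue in standard LMTP derivations and constitutes the genuinely new contribution here. A practical route is to first carry out the calculation for discrete $A_t$, where $g^d_{t,\mathcal{B}}$ is most transparent, and then extend to continuous $A_t$ via Assumption~\ref{assumption:piecewise-invertibility}, which guarantees that the piecewise change of variables underlying $g^d_{t,\mathcal{B}}$ is well defined on each $\mathcal{I}_{t,j}$.
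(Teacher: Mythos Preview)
Your proposal is correct and follows essentially the same route as the paper: both arguments establish the von-Mises expansion by telescoping across $t$ and invoking the change-of-measure identity that converts expectations under $g_t$ into expectations under $g^d_{t,\mathcal B}$, then isolate the two families of second-order product terms arising from mismatches in $\alpha_t$ and $G_t$. The only organizational difference is that the paper packages the telescoping through the linear functionals $\Psi_t$ and appeals to Lemma~\ref{lemma:riesz} (the Riesz-representer characterization of $\alpha_t$) rather than carrying out the change-of-measure step by direct iterated conditioning as you propose, but the underlying calculation is the same.
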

The proof follows directly from the results presented in Appendix~A4%\ref{appendix:von-mises}
and is therefore omitted.  The non-parametric efficiency bound for regular estimators of $\theta$ is given by the variance of the EIF: $\E_{\P_0}[\D(Z; \P_0)^2]$. This variance is also the minimax efficiency bound in the sense of Theorem 8.11 of \cite{vanderVaart98}. The EIF has a similar form to that of the EIF for unconditional LMTP parameters. The main difference is in the new definition of the ratios $r_t$ that incorporate the conditional structure of the parameter, and the addition of the indicator and inverse probability of future inclusion in the treatment set.

To illustrate these results, we apply Theorem~\ref{thm:eif} to derive the EIFs of the parameters in the running examples.
\begin{example}[continued]
    The efficient influence function for the ATT parameter $\theta$ is
    \begin{align}
        \D(Z;\P)=\frac{\one\{A=0\}}{\P(A=1)}\frac{\g(1\mid L)}{\g(0\mid L)}\{Y-\m(A,L)\}+\frac{\one\{A=1\}}{\P(A=1)}\left\{\m(0, L) - \theta(\P)\right\},
    \end{align}
    which coincides with the efficient influence function given by \cite{hubbard2011} for the ATT. 
\end{example}
%\id{The below needs to be edited according to the new definition of the parameter in these examples}
\begin{example}[continued]
The efficient influence function for the longitudinal ATT parameter $\theta = \E[Y(0, 0) \mid A_1 = 1]$ is given by
{\footnotesize
\begin{align}
    \D&(X;\P) = \frac{\one\{A_2=0,A_1=0\}}{\P(A_1=1)}\frac{\P(A_2=1\mid H_2)}{\P(A_2\mid H_2)}\frac{\P(A_1=1\mid H_1)}{\P(A_1\mid H_1)}\times[Y - \E(Y\mid A_2, H_2)]\\
    &+\frac{\one\{A_1=0\}}{\P(A_1=1)}\frac{\P(A_1=1\mid H_1)}{\P(A_1\mid H_1)}\times[ \E(Y\mid A_2, H_2)- \E\{\E(Y\mid A_2, H_2)\mid A_1, L_1\}]\\
    &+\frac{\one\{A_1=1\}}{\P(A_1=1)}[\E\{\E(Y\mid A_2, H_2)\mid A_1, L_1 \} -\theta(\P)],
\end{align}}%
\end{example}
\begin{example}[continued]
    The efficient influence function is given by
    {\footnotesize
    \begin{align}
        \D(Z;\P)=&\frac{1}{P(A_1 \in \mathcal{B}_1)} \Bigg[\left( \frac{\one\{A_1>\delta b\}}{\P(A_1 > b)}\frac{\g_1(\delta^{-1}A_1\mid H_1)}{\g_1(A_1\mid H_1)} \right) \{m_{2}(\delta A_{2}, H_{2}) -\m_1(A_1,H_1)\} \\ 
        &+ \sum_{t=2}^\tau \left( \prod_{s=1}^t \frac{\g_s(\delta^{-1}A_s\mid H_s)}{\g_s(A_s\mid H_s)} \right) \{m_{t+1}(\delta A_{t+1}, H_{t+1}) -\m_s(A_t,H_t)\}\\
        &+\one\{A_1 \in \mathcal{B}_1\}\{\m_1(\delta A_1, H_1) - \theta(\P)\}\Bigg].
    \end{align}}%
\end{example}

\section{Riesz representers}
\label{section:riesz}
In this section we introduce a novel interpretation of the cumulated ratios $\alpha_t$ as \textit{Riesz representers}, which leads to novel estimation strategies. 
For context, by the Riesz Representation Theorem we know that any linear bounded functional can be represented as an inner product with respect to a function called a \textit{Riesz representer}. The existence of such a function has been used to construct estimators of statistical functionals of interest in causal inference \citep{chernozhukov2018dml}. \citet{chernozhukov2021automatic} estimated Riesz representers via empirical loss minimization using a novel loss function tailored for the problem. We draw on this theory to aid in estimation of the cumulated ratios $\alpha_t$, which have a complex form in terms of the conditional probabilities (densities) $g_t$ and $g_{t,\mathcal{B}}$. By reinterpreting $\alpha_t$ as being the Riesz representer of a carefully chosen functional, we are able to make use of empirical loss based estimation strategies to aid in directly estimating the ratios $\alpha_t$.  

In order to interpret the cumulative ratios $\alpha_t$ as Riesz representers, we first need to carefully define a set of linear functionals with the property that $\alpha_t$ is their Riesz representer. 
To do so, first define as $b_t(A_t, H_t; m_t) = m_t(A_t^d, H_t)$; that is, $b_t$ represents applying the function $m_t$ to $A_t^d$, the shifted treatment value. Then, for $t=0,\ldots, \tau$, let
\begin{align}
    \Psi_t(m_t)=\E_\P\bigg\{\E_\P\big\{\E_\P[b_t(A_t, H_t; m_t)\mid A_{t-1}=A_{t-1}^d, H_{t-1}, \underline{A}_{t} \in \underline{\mathcal{B}}_{t}]\mid A_{t-2} = A_{t-2}^d,  H_{t-2}, \underline{A}_{t-1} \in \underline{\mathcal{B}}_{t-1}\big\}\cdots\bigg\},
\end{align}
where $m_0$ is defined as $\E_\P\{m_1(A_1^d, H_1)\mid \underline A_1\in \underline{\mathcal B}_1\}$ and $\Psi_0(m_1)=m_0$.  %\id{Should this be $b_{\tau + 1} = Y$?}
% In the above, random variables indexed by $t\leq 0$ should be understood to be the empty set. 
% and products of the type $\prod_{t=1}^0a_t$ are defined as $1$. 
Note that the functionals $\Psi_t$ are linear in $m_t$. Therefore, by the Riesz Representation Theorem, it follows that there exists a function $f_t$ such that
\begin{align}
    \label{eq:riesz-representer}
    \Psi_t(m_t)=\E_\P[f_t(A_t, H_t)m_t(A_t, H_t)]. 
\end{align}
The key property of the functional $\Psi_t$ is that its Riesz representer is precisely the cumulative ratio $\alpha_{t}$, which we establish formally in the following lemma.

\begin{lemma}\label{lemma:riesz}
The Riesz representer for $\Psi_t$, $t = 0, \dots, \tau$ is equal to  $\alpha_{t}$. 
\end{lemma}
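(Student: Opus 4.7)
The plan is to show directly that
\begin{equation*}
\Psi_t(m_t) = \E_\P[\alpha_t(A_t, H_t)\, m_t(A_t, H_t)]
\end{equation*}
for every square-integrable $m_t$; once established, uniqueness of the Riesz representer in $L^2(\P)$ immediately identifies $\alpha_t$ as the representer guaranteed by \eqref{eq:riesz-representer}.

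The algebraic engine is a single change-of-measure identity that I would verify first: for every $k = 1, \ldots, \tau$ and every integrable $f$,
\begin{equation*}
\E_\P[r_k(A_k, H_k)\, f(A_k, H_k) \mid A_{k-1}, H_{k-1}] = \E_\P[f(A_k^d, H_k) \mid A_{k-1}, H_{k-1},\, \underline{A}_k \in \underline{\mathcal{B}}_k].
\end{equation*}
In the discrete case this is obtained by substituting the sum expression for $g^d_{k,\mathcal{B}}$, collapsing the indicator $\1\{A_k = d(a, H_k)\}$ to eliminate the dummy $a$, and using that $G_{k-1}(A_{k-1}, H_{k-1}) = \P(\underline{A}_k \in \underline{\mathcal{B}}_k \mid A_{k-1}, H_{k-1})$ in the denominator. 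In the continuous case, Assumption~\ref{assumption:piecewise-invertibility} lets one apply the change-of-variables formula piecewise on each $\mathcal I_{j,k}(H_k)$; the Jacobian $|b_{j,k}'|$ appearing in the definition of $g^d_{k,\mathcal{B}}$ is precisely what cancels under the substitution $a_k \mapsto d(a_k, H_k)$.

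With the identity established, the main statement follows by peeling off density ratios one at a time. Writing $\alpha_t = r_1 \cdots r_t$, noting that $r_1, \ldots, r_{t-1}$ are all $(A_{t-1}, H_{t-1})$-measurable, and applying the tower property gives
\begin{equation*}
\E_\P[\alpha_t m_t] = \E_\P\!\left[r_1 \cdots r_{t-1}\, \E_\P[r_t(A_t, H_t)\, m_t(A_t, H_t) \mid A_{t-1}, H_{t-1}]\right].
\end{equation*}
The key identity converts the inner conditional expectation to $\E_\P[m_t(A_t^d, H_t) \mid A_{t-1}, H_{t-1}, \underline{A}_t \in \underline{\mathcal{B}}_t]$, which is exactly the innermost conditional expectation appearing in the definition of $\Psi_t(m_t)$. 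Treating the resulting quantity as a new function $\tilde m_{t-1}(A_{t-1}, H_{t-1})$, conditioning on $(A_{t-2}, H_{t-2})$, and applying the identity a second time with $f = \tilde m_{t-1}$ reproduces the next layer of $\Psi_t$: the argument $f(A_{t-1}^d, H_{t-1})$ supplies precisely the evaluation $A_{t-1} = A_{t-1}^d$ that is built into the definition of $\Psi_t$. Iterating this procedure $t$ times reproduces the full nested structure of $\Psi_t(m_t)$; the final application at $k = 1$ has nonexistent $(A_0, H_0)$ and so becomes an unconditional expectation restricted to $\underline{A}_1 \in \underline{\mathcal{B}}_1 = \bar A \in \bar{\mathcal B}$, which matches the outermost layer of $\Psi_t$. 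The boundary case $t = 0$ is consistent with the empty-product convention $\alpha_0 = 1$ together with the authors' definition $\Psi_0(m_1) = m_0$.

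The main technical hurdle is the continuous-treatment derivation of the key identity, which requires carefully tracking the partition $\{\mathcal I_{j,k}(H_k)\}$, the branch inverses $b_{j,k}$, and their Jacobians through the change-of-variables computation and verifying that they recombine correctly under integration against $g_k$. This is bookkeeping-heavy but mechanical; once past it, the remainder of the argument is a clean iterated application of the tower property.
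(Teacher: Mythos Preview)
Your proposal is correct and takes essentially the same approach as the paper: the paper packages your change-of-measure identity into two auxiliary lemmas (one for the inductive step, showing $m_t(a_t,h_t)=\E[r_{t+1}\,m_{t+1}\mid A_t=a_t,H_t=h_t]$, and one for the base case at $t=1$), proving each via exactly the discrete summation and piecewise change-of-variables you describe, and then chains them together. Your formulation---stating the identity once for a generic $f$ and iterating by the tower property---is slightly cleaner for establishing the representer property for arbitrary inputs, but the underlying computation is identical.
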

The above result is crucial in that it establishes that the cumulative ratios $\alpha_t$ can be interpreted as Riesz representers for a particular set of parameters. This interpretation leads to a novel estimation strategy for the ratios $\alpha_t$, which we discuss in the next section.

A straightforward corollary of the interpretation of the $\alpha_t$ as a Riesz representer is that it is possible to express the \GATT as a weighted mean, analagous to the IPW expressions of treatment effects such as the ATE.

\begin{theorem}
\label{theorem:ipw-formulation}
Under \ref{assumption:positivity}, \ref{assumption:strong-sequential-randomization}, and \ref{assumption:piecewise-invertibility} the \GATT parameter can be written as
\begin{align}
    \theta = \E\left[ \alpha_\tau(A_\tau, H_\tau) Y \right]. 
\end{align} 
\end{theorem}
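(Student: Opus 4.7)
The plan is to chain together three ingredients: the law of iterated expectations, the Riesz representation from Lemma~\ref{lemma:riesz}, and the sequential regression identification from the identification theorem.

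First, I would condition on $(A_\tau, H_\tau)$ inside the expectation. Under the bookkeeping convention $m_{\tau+1} = Y$, $A_{\tau+1} = 1$, $\mathcal{B}_{\tau+1} = \{1\}$, the event $\underline A_{\tau+1} \in \underline{\mathcal B}_{\tau+1}$ holds almost surely, so the sequential regression definition in \eqref{eq:m_t} reduces to $m_\tau(A_\tau, H_\tau) = \E[Y \mid A_\tau, H_\tau]$. Tower then gives
\begin{align}
    \E[\alpha_\tau(A_\tau, H_\tau)\, Y] = \E[\alpha_\tau(A_\tau, H_\tau)\, m_\tau(A_\tau, H_\tau)].
\end{align}

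Second, I would invoke Lemma~\ref{lemma:riesz} at $t = \tau$, which states that $\alpha_\tau$ is the Riesz representer of $\Psi_\tau$. By the representation \eqref{eq:riesz-representer}, the right-hand side above equals $\Psi_\tau(m_\tau)$.

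Third, I would unwind $\Psi_\tau(m_\tau)$ from the inside out, matching each nested conditional expectation to a step of the recursion defining $m_t$. By definition, the innermost layer is $\E_\P[m_\tau(A_\tau^d, H_\tau) \mid A_{\tau-1} = A_{\tau-1}^d, H_{\tau-1}, \underline A_\tau \in \underline{\mathcal B}_\tau]$; the recursion \eqref{eq:m_t} at $t = \tau - 1$ says this conditional expectation, viewed as a function of its conditioning argument, is precisely $m_{\tau-1}$ evaluated at $(A_{\tau-1}^d, H_{\tau-1})$. Iterating this collapse from $t = \tau$ down to $t = 1$ leaves $\E_\P[m_1(A_1^d, L_1) \mid \underline A_1 \in \underline{\mathcal B}_1] = \theta_1$, and the identification theorem identifies $\theta_1$ with the \GATT $\theta$. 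Assumptions \ref{assumption:positivity}, \ref{assumption:strong-sequential-randomization}, and \ref{assumption:piecewise-invertibility} enter via the prerequisites of the identification theorem and of Lemma~\ref{lemma:riesz}.

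The main obstacle is bookkeeping rather than mathematics: one must verify that the slightly unusual conditioning $A_{t-1} = A_{t-1}^d$ and the conditioning event $\underline A_t \in \underline{\mathcal B}_t$ appearing inside $\Psi_\tau$ align exactly with the conditioning structure in the definition of $m_{t-1}$, so that the collapse to $m_{t-1}(A_{t-1}^d, H_{t-1})$ is valid at each layer. Once that alignment is made explicit, the proof is a direct three-line chain: tower, Riesz representation, and telescoped sequential regression.
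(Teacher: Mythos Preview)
Your proposal is correct and matches the paper's approach: the paper's proof is the one-liner ``follows directly from Lemma~\ref{lemma:riesz} by noticing that $\theta = \Psi_\tau(m_\tau)$,'' and your three steps (tower to pass from $Y$ to $m_\tau$, Riesz representation via Lemma~\ref{lemma:riesz}, and collapsing the nested expectations in $\Psi_\tau$ back through the recursion \eqref{eq:m_t}) are exactly what unpacks that sentence. The bookkeeping alignment you flag as the obstacle is indeed the only substantive check, and your description of it is accurate.
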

This result follows directly from Lemma~\ref{lemma:riesz} by noticing that $\theta = \Psi_\tau(m_{\tau})$. A simple weighted estimator of $\theta$ can then be formed by taking the mean of the observed outcomes weighted by the estimated cumulative density ratios $\alpha_{\tau}$. In the single time point case for the ATT, the estimator simplifies to the below.
\setcounter{example}{1}
\begin{example}[continued]
    The weighted representation of the ATT parameter is given by
    \begin{align}
        \theta = \frac{1}{P(A = 1)} \E\left[\frac{\mathbb{I}[A = 0]P(A = 1 | L)}{P(A = 0| L)} Y \right].
    \end{align}
\end{example}

Although the parameters $\alpha_t$ can be estimated relatively easily in the above example by plugging in estimates of the propensity score obtained with regression methods, they are not so easy to estimate in general cases. A first option is to estimate directly the densities $\g_t(a_t\mid h_t)$ and plug that estimate into the definition of $\alpha_t$ \eqref{eq:alpha_t}. While such an approach may be feasible for categorical $A_t$ and for few time points, implementing a similar approach for continuous or multivariate $A_t$ would require estimation of conditional densities and computation of numerical integrals which may be challenging and computationally intensive. As an alternative, we estimate the cumulated ratios directly via empirical loss minimization, drawing on the interpretation derived above of the $\alpha_t$ as Riesz representers \cite{chernozhukov2021automatic, chernozhukov2023automatic}.

The overall goal is to express $\alpha_t$ as the minimizer of a carefully chosen loss function. To set up the optimization problem, express $\alpha_t$ as the solution to an optimization problem over a candidate space $\tilde{\mathcal{A}}$, and manipulating the expression:
\begin{align*}
    \alpha_t &= \argmin_{\tilde \alpha \in \tilde{\mathcal{A}}}\E\{\tilde\alpha(A_t,H_t) - \alpha_t(A_t,H_t)\}^2\\
    &= \argmin_{\tilde \alpha \in \tilde{\mathcal{A}}}\E\{\tilde\alpha(A_t,H_t)^2 - 2\tilde\alpha(A_t,H_t)\alpha_t(A_t,H_t)\}\\
    &= \argmin_{\tilde \alpha \in \tilde{\mathcal{A}}}\{\E\left[\tilde\alpha(A_t,H_t)^2\right] - 2\Psi_{t-1}(\tilde\alpha)\}.
\end{align*}
The critical step is applying \eqref{eq:riesz-representer} in the second line such that the the optimization problem is only in terms of observable quantities.
Further simplification (see Appendix~A5
%\ref{appendix:riesz-empirical-loss}
for details) yields
\begin{align}
    \alpha_t =\argmin_{\tilde \alpha \in \tilde{\mathcal{A}}}\E\left\{\tilde\alpha(A_t,H_t)^2 - \alpha_{t-1}(A_{t-1}, H_{t-1}) \frac{\one\{\underline{A}_{t} \in \underline{\mathcal{B}}_{t}\} }{G_{t-1}(A_{t-1}, H_{t-1}) }b_t(A_t, H_t; \tilde\alpha)\right\}.
\end{align}
To estimate $\alpha_t$ in practice we solve the corresponding empirical minimization problem:
\begin{align}
    \label{eq:empirical-riesz}
    \hat{\alpha}_t =\argmin_{\tilde \alpha \in \tilde{\mathcal{A}}}\Ec_n\left\{\tilde\alpha(A_t,H_t)^2 - \hat{\alpha}_{t-1}(A_{t-1}, H_{t-1}) \frac{\one\{\underline{A}_{t} \in \underline{\mathcal{B}}_{t}\} }{\hat{G}_{t-1}(A_{t-1}, H_{t-1}) }b_t(A_t, H_t; \tilde\alpha)\right\},
\end{align}
where $\hat{\alpha}_0 \equiv 1$ and $\hat{G}_{t-1}(A_{t-1}, H_{t-1})$ is an estimator of $G_{t-1}(A_{t-1}, H_{t-1})$. 
Crucially, estimating $\alpha_t$ in this manner avoids having to separately estimate the numerator and denominator of the probability (density) ratios and subsequently dividing them and mutiplying the ratios cumulatively, as is necessary for a plug-in estimator of the Riesz representer. Rather, using the above loss function allows us to estimate the Riesz representer directly. The choice of candidate space $\mathcal{A}$ implies different options for practically solving the optimization problem. Methods based on flexible splines, random forests, or neural networks allow for rich candidate spaces $\mathcal{A}$. Following \citep{chernuzhukov2022riesznet}, we use a neural network to estimate $\alpha_t$. However, contrary to \citep{chernuzhukov2022riesznet} we chose to use cross-fitting when estimating $\alpha_t$. Without cross-fitting, technical conditions are required to control the complexity of the estimators, such as Donsker \citep{vanderLaanRose11} or critical radius assumptions \citep{chernozhukov2021automatic, wainwright2019}. Cross-fitting is a commonly used method that obviates the need for such assumptions \citep{zheng2011cross, chernozhukov2018dml}, although it potentially requires large amounts of data. With cross-fitting our method can be applied with any empirical loss minimization method used to estimate $\alpha_t$. 

\section{Estimation}
\label{section:estimation}
The analyses in Section~\ref{section:efficiency} reveal that constructing an estimator with desirable properties (efficiency and asymptotic normality) requires ensuring that the bias term $\E_{\P_0}[\D(O; \hat \P)]$ is asymptotically negligible, and assuming that the second-order remainder term $\Rem(\hat{P}, \P_0)$ goes to zero at a certain rate.
%\id{While these estimators do in fact ensure that the first order bias is negligible, they cannot ensure that the second-order term fores to zero, we still need to assume that this is a property of the estimators used. I would modify the previous sentences along the following lines: ``''} 
Fortunately, estimating the entirety of $\P_0$ is not necessary: it suffices to estimate only the parts of $\P_0$ relevant to $\theta$ and its EIF. In our case, the relevant parts are the regression parameters $m_t$, the cumulative ratios $\alpha_t$, and the conditional probabilities $G_t$. Collectively we refer to these nuisance parameters as $\eta = (m_t, \alpha_t, G_t : t \in \{1, \dots \tau \})$. We first discuss strategies for estimating the nuisance parameters $m_t$, $\alpha_t$, and $G_t$ and then show how they can be used to construct estimators of $\theta$.

The sequential regression parameters $m_t$ can be estimated using flexible regression techniques from the machine learning literature, using the sequential regression approach outlined in Section~\ref{section:causal-effects} and described further below. Similarly, the parameters $G_t$ can be estimated using any binary regression method. The Riesz representers $\alpha_t$ may be estimated using the empirical loss minimization strategy described in Section~\ref{section:riesz}. With strategies in hand for estimating the nuisance parameters $m_t$, $G_t$, and $\alpha_t$, we now turn to constructing estimators of $\theta$. 

\subsection{Substitution and weighting estimators}
A substitution estimator forms an estimate of $\theta$ by plugging the estimates of $m_t$ directly into the identification result \eqref{eq:m_t}. This can be described algorithmically as follows:
\begin{enumerate}
    \item Initialize $\hat{m}_{\tau + 1, i}(A_{\tau + 1, i}^d, H_{\tau + 1, i}) = Y_i$. 
    \item For $t = \tau, \dots, 1$:
    \begin{itemize}
        \item Using any regression method, regress $\hat{m}_{t + 1}(A^d_{t + 1, i}, H_{t + 1}, i)$ on $(A_{t,i}, H_{t,i})$ for all $i \in \{1, \dots, n \}$ such that $\underline{A}_{t + 1, i} \in \underline{\mathcal{B}}_{t + 1}$.
        \item Use the regression model to form predictions $\hat{m}_{t}(A_{t, i}^d, H_{t, i})$ for all $i \in \{1, \dots, n \}$. 
    \end{itemize}
    \item Form the substitution estimator as
    \begin{align}
        \hat{\theta}_{sub} = \frac{1}{\sum_{i=1}^n \1[\bar{A}_i \in \bar{\mathcal{B}}]} \sum_{i=1}^n \1[\bar{A}_i \in \bar{\mathcal{B}}] \hat{m}_{1}(A_{1,i}^d, L_{1, i}). 
    \end{align}
\end{enumerate}
For a weighting estimator, suppose that that we have an estimator $\hat{\alpha}_\tau$ of $\alpha_\tau$. This estimate may be formed for example via \eqref{eq:empirical-riesz}. A weighted estimator of $\theta_1$ is then given by 
%\id{Should fix the below once the IPW ID result is fixed.}
\begin{align}
    \hat{\theta}_{ipw} = \frac{1}{n} \sum_{i=1}^n \hat{\alpha}_{\tau}(A_{\tau,i}, H_{\tau,i}) Y_i.
\end{align}
The consistency of each approach depends on the consistency of the underlying nuisance estimators. The substitution estimator will be consistent if all of the regressions are consistent. The weighted estimator is consistent if $\hat{\alpha}_\tau$ is estimated consistently. The sampling distributions of each estimator are generally unknown, save for the case of the nuisance parameters being estimated using well-specified parametric models, which is unlikely to be true in practical scenarios. Thus, we turn next to an estimator that is consistent and asymptotically normal even when data-adaptive nuisance estimators are used.

\subsection{Targeted minimum loss-based estimator}
Targeted Minimum Loss-Based Estimation (TMLE) is a framework for constructing asymptotically efficient non-parametric 
 plug-in estimators \citep{vanderLaanRose11}. The core idea is to solve the EIF estimating equation by carefully fluctuating initial estimates of the parts of $\P_0$ relevant to the parameter of interest. The TMLE estimator for \GATTs is similar to the one given in \cite{diaz2023lmtp} for LMTPs with modifications to incorporate the conditional structure of the \GATT parameter. Following their work, we use cross-fitting in order to avoid technical conditions on the complexity of the nuisance estimators \citep{klaassen1987consistent, zheng2011cross}. Randomly partition the indexes $\{1, \dots, n \}$ into $J$ validation sets $\mathcal{V}_1, \dots, \mathcal{V}_J$. For each $j \in \{1, \dots, J \}$, the training sample is given by $\mathcal{J}_j = \{ 1, \dots, n \} \backslash \mathcal{V}_j$. Let $j(i)$ be the validation set containing index $i$. 

The goal of the TMLE algorithm is to form a set of updated estimates $\tilde{m}_{t,j(i)}$ such that the empirical EIF estimating equation is solved. Inspection of the form of the EIF shows that this will be the case when 
%\id{In the below, what happened to the term $\frac{\1\{\underline{A}_{t+1} \in \underline{\mathcal{B}}_{t+1}\}}{G_{t,\P}(A_t, H_t)}$ that shows up in the clever covariate of the EIF? I see the indicator but not the IPW. Also, since $\hat{\alpha}_t $ is also a function of $(A_t, H_t)$ it might be clearer to write  $\hat{\alpha}_t(A_t, H_t)$}
\begin{align}
    \Ec_n\left[ \sum_{t=1}^\tau \hat{\alpha}_t(A_t, H_t) \frac{\1[\underline{A}_{t+1} \in \underline{\mathcal{B}}_{t+1}]}{\hat{G}_t(A_t, H_t)} \left\{ \tilde{m}_{t+1}(A_{t+1}^d, H_{t+1}) - \tilde{m}_{t}(A_t, H_t) \right\} \right] \approx 0. \label{eq:empirical-eif}
\end{align}
TMLE ensures this is the case by fluctuating an initial estimate $\hat{m}_t$ via a carefully chosen parametric submodel indexed by a parameter $\epsilon \in \mathbb{R}$ and loss function to form updated estimates $\tilde{m}_t$. The parametric submodel is chosen such that the gradient of the loss with respect to $\epsilon$ equals \eqref{eq:empirical-eif}. When the parameter $\epsilon$ is estimated by minimizing the loss function, this ensures that at the minimizer the empirical mean of the gradient is approximately zero. As such, the updated estimates approximately solve the empirical estimating equation \eqref{eq:empirical-eif}. For \GATTs, the fluctuation model is chosen to be a weighted generalized linear model with canonical link, an intercept parameter $\epsilon$, and offset set to the initial estimates $\hat{m}_t$. 
The TMLE algorithm is as follows:
\begin{enumerate}
    \item Initialize $\tilde{\eta} = \hat{\eta}$ and $\tilde{m}_{\tau + 1, j(i)}(A^d_{\tau + 1, i}, H_{\tau + 1, i}) = Y_i$. 
    \item For $s = 1, \dots, \tau$ compute weights 
    \begin{align}
        \omega_{s,i} = \frac{\1[\underline{A}_{t+1} \in \underline{\mathcal{B}}_{t+1}]}{\hat{G}_t(A_{t,i}, H_{t,i})} \hat{\alpha}_{s, i}(A_{t,i}, H_{t,i}).
    \end{align}
    \item For $t = \tau, \dots, 1$:
    \begin{itemize}
        \item Find the maximum likelihood estimate $\hat{\epsilon}$ of $\epsilon$ under the model
        \begin{align}
            \mathrm{link}(\tilde{m}_t^\epsilon(A^d_{t,i}, H_{t,i}) = \epsilon + \mathrm{link}( \tilde{m}_{t,j(i)}(A_{t,i}, H_{t,i})). 
        \end{align}
        with weights computed in Step 2.  
        \item Update $\tilde{m}$ as
        \begin{align}
            \mathrm{link}(\tilde{m}_t^{\hat{\epsilon}}(A_{t,i}, H_{t,i}) &= \hat{\epsilon} + \mathrm{link}( \tilde{m}_{t,j(i)}(A_{t,i}, H_{t,i})), \\
            \mathrm{link}(\tilde{m}_t^{\hat{\epsilon}}(A^d_{t,i}, H_{t,i}) &= \hat{\epsilon} + \mathrm{link}( \tilde{m}_{t,j(i)}(A^d_{t,i}, H_{t,i})). 
        \end{align}
    \end{itemize}
    \item The TMLE estimate is 
    \begin{align}
        \hat{\theta}_{tmle} = \frac{1}{\sum_{i=1}^n \1\left[ \bar{A}_i \in \bar{\mathcal{B}} \right]} \sum_{i=1}^n \1\left[ \bar{A}_i \in \bar{\mathcal{B}} \right] \tilde{m}_{1, j(i)}(A_{1,i}^d, L_{1,i}).
    \end{align}
\end{enumerate}
The TMLE algorithm for \GATT parameters is similar to the TMLE for LMTPs \citep{diaz2023lmtp}. The key difference between the algorithms is in Step 2, where for \GATTs the weights are multiplied by an indicator of the future treatment assignment lying in the conditioning set. The statistical properties of TMLE flow from the fact that after this iterative procedure, \eqref{eq:empirical-eif} is satisfied. Using this, it is possible to show that the TMLE estimator is asymptotically normal and efficient.
\begin{theorem}
    \label{theorem:tmle-weak-convergence}
    Assume that, for each $j \in \{1, \dots, J \}$, 
    \begin{align}
         \sum_{t=1}^\tau \left\| \hat{\alpha}_{t,j} - \alpha_t \right\| \|\tilde{m}_{t,j} - m_t\| = o_P(n^{-1/2}).
    \end{align}
    and 
    \begin{align}
         \sum_{t=1}^\tau \left\| \hat{G}_{t,j} - G_t \right\| \|\tilde{m}_{t,j} - m_t\| = o_P(n^{-1/2}).
    \end{align}
    Assume there exists some $c < \infty$ such that $\P(\alpha_t < c) =1$ and $\P(\hat{\alpha}_t(A_t, H_t) < c) = 1$. %\id{Might be clearer to write $\alpha_t(A_t, H_t)$ here}
    Then
    \begin{align}
        \sqrt{n}(\hat{\theta}_{tmle} - \theta) \leadsto N(0, \sigma^2), 
    \end{align}
    where $\sigma^2 = \mathrm{Var}_{\P_0}(\D(Z; \P_0))$. 
\end{theorem}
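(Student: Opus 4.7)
The plan is to follow the standard TMLE convergence argument built on the von-Mises expansion from Theorem~\ref{thm:eif}. Applied with $\hat{\P} = \tilde{\P}$, the TMLE-updated law, and $\F = \P_0$, the expansion yields $\hat{\theta}_{tmle} - \theta = -\E_{\P_0}[\D(Z; \tilde{\P})] + \Rem(\tilde{\P}, \P_0)$. Adding and subtracting $\Pn \D(Z; \tilde{\P})$ and then $(\Pn - \P_0)\D(Z; \P_0)$ yields the decomposition
\begin{align}
\hat{\theta}_{tmle} - \theta &= (\Pn - \P_0)\D(Z; \P_0) + (\Pn - \P_0)\{\D(Z; \tilde{\P}) - \D(Z; \P_0)\} \\
&\quad - \Pn \D(Z; \tilde{\P}) + \Rem(\tilde{\P}, \P_0).
\end{align}
The strategy is to show each of the last three terms is $o_P(n^{-1/2})$ and then apply the CLT to the first.

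First I would verify that the TMLE update solves the empirical EIF equation, i.e.~$\Pn \D(Z; \tilde{\P}) = o_P(n^{-1/2})$. By construction of the weighted canonical-link GLM fluctuation in Step~3, the score for $\epsilon$ at the maximum likelihood estimate $\hat{\epsilon}$ is exactly the $t$-th summand of the empirical EIF with fluctuated $\tilde{m}_{t+1}$ and $\tilde{m}_t$. Since Step~3 runs sequentially from $t=\tau$ down to $t=1$, the final $\tilde{m}_t$ simultaneously make every summand of $\Pn \D(Z; \tilde{\P})$ vanish. Second, I would dispatch the empirical process term $(\Pn - \P_0)\{\D(Z; \tilde{\P}) - \D(Z; \P_0)\}$ via cross-fitting: conditional on the training sample $\mathcal{J}_j$, the fold-$j$ nuisance estimator $\tilde{\P}_j$ is fixed, so the conditional variance of the fold-$j$ contribution equals $n^{-1}\mathrm{Var}_{\P_0}\{\D(Z;\tilde{\P}_j) - \D(Z; \P_0) \mid \mathcal{J}_j\}$. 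Using the uniform bound $c$ on $\alpha_t$ and $\hat{\alpha}_t$, the positivity-based bound on $1/G_t$, and the $L^2$ consistency of the nuisance estimators implied by the rate hypotheses, this conditional variance is $o_P(1)$, and Chebyshev's inequality delivers the required $o_P(n^{-1/2})$ rate.

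Third, I would bound $\Rem(\tilde{\P}, \P_0)$ using its explicit form in Theorem~\ref{thm:eif}. Applying Cauchy--Schwarz to each of the $2\tau$ summands bounds the first sum by a multiple of $\sum_t \|\hat{\alpha}_t - \alpha_t\|\,\|\tilde{m}_t - m_t\|$ and, after extracting the factor $\alpha_{t,\F}/G_{t,\F}$ via boundedness, bounds the second sum by a multiple of $\sum_t \|\hat{G}_t - G_t\|\,\|\tilde{m}_t - m_t\|$. The two rate hypotheses of the theorem are tailored precisely to yield $\Rem(\tilde{\P}, \P_0) = o_P(n^{-1/2})$. Combining the three $o_P(n^{-1/2})$ contributions with $\sqrt{n}(\Pn - \P_0)\D(Z; \P_0) \leadsto N(0, \sigma^2)$ by the ordinary CLT completes the proof.

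The main obstacle, in my view, is the empirical process control of $(\Pn - \P_0)\{\D(Z; \tilde{\P}) - \D(Z; \P_0)\}$: the EIF is a product of several estimated pieces ($\hat{\alpha}_t$, $\hat{G}_t$, and the targeted $\tilde{m}_t$), so verifying that the conditional $L^2$ difference vanishes requires careful use of the uniform boundedness of the weights and of the sequential structure across $t$. A secondary technical point is ensuring that the iterative fluctuation in Step~3 does not degrade the rate of $\tilde{m}_t$ relative to the initial $\hat{m}_t$; this can be handled by showing $\hat{\epsilon}_t = o_P(1)$ at each step, which follows from the score being evaluated at initial estimates whose empirical mean is $o_P(1)$ under consistency of $\hat{\alpha}_t$, $\hat{G}_t$, and $\hat{m}_t$, and then propagating this rate through the offset used at step $t-1$.
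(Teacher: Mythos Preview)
Your proposal is correct and follows precisely the standard TMLE weak-convergence argument: von-Mises decomposition, score-equation property of the fluctuation, cross-fitting control of the empirical-process term, and Cauchy--Schwarz on the second-order remainder. The paper's own proof simply defers to \citet[Theorem~3]{diaz2023lmtp}, whose argument is exactly the one you have outlined, so there is no substantive difference in approach.
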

Satisfying the assumptions of Theorem~\ref{theorem:tmle-weak-convergence} requires that all the nuisance parameters are estimated at sufficiently fast rates. Consistency of $\hat{\theta}_{tmle}$, however, can be achieved even if some of the nuisance parameters are not estimated consistently, as shown in the following theorem.
\begin{theorem}[$\tau + 1$ multiply robust consistency of TMLE]
    \label{theorem:robustness}
    Assume that $\|\hat{G}_t - G_t \| = o_p(1)$ for all $t \in \{1, \dots, \tau - 1 \}$ and that there exists a $k \in \{1, \dots, \tau - 1 \}$ such that $\| \tilde{m}_t - m_t \| = o_p(1)$ for all $t > k$ and $\| \hat{\alpha}_t - \alpha_t \| = o_p(1)$ for all $1 < t \leq k$. Then $\hat{\theta}_{tmle} - \theta = o_p(1)$. 
\end{theorem}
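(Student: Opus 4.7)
The plan is to combine the von-Mises expansion \eqref{eq:von-mises} with the fact that the TMLE iteration drives the empirical EIF equation \eqref{eq:empirical-eif} to approximately zero, and then to show that the remainder $\Rem(\hat\P,\P_0)$ vanishes in probability under the stated combinations of nuisance consistency. Writing $\hat\P$ for the law determined by the TMLE-updated nuisances $\tilde\eta$ and applying \eqref{eq:von-mises} at $\P=\hat\P$, $\F=\P_0$, and then adding and subtracting $\Ec_n\D(Z;\hat\P)$, I obtain
\begin{align}
    \hat\theta_{tmle} - \theta \;=\; (\Ec_n - \E_{\P_0})\D(Z;\hat\P) \;-\; \Ec_n\D(Z;\hat\P) \;+\; \Rem(\hat\P, \P_0),
\end{align}
and I would argue that each of the three summands on the right is $o_p(1)$.

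The middle term is $o_p(1)$ (indeed $o_p(n^{-1/2})$) by construction of the TMLE iteration, since the fluctuation submodel is chosen so that its score at $\hat\epsilon=0$ coincides with the summand of \eqref{eq:empirical-eif}. The empirical-process term is handled by a fold-wise Chebyshev argument: under cross-fitting, conditional on the training split $\mathcal{J}_j$, $\D(Z;\hat\P)$ is a fixed bounded function of the independent validation sample---boundedness following from the assumed bounds on $\hat\alpha_t$, $1/\hat G_t$, and the outcome---so the validation-mean minus its expectation is $O_p(n^{-1/2})$ on each fold and thus overall.

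The main content of the proof is bounding the remainder. Reading off Theorem~\ref{thm:eif} and applying Cauchy--Schwarz yields
\begin{align}
    |\Rem(\hat\P, \P_0)| \;\lesssim\; \sum_{t=1}^\tau \|\hat\alpha_t - \alpha_t\|\cdot \|m_{t,\hat\P,\P_0} - m_t\| \;+\; \sum_{t=1}^\tau \Big\|1 - \tfrac{\hat G_t}{G_t}\Big\|\cdot \|m_{t,\hat\P,\P_0} - m_t\|,
\end{align}
where $\|\alpha_t\|_\infty<\infty$ is absorbed into the constant. For the second sum, $G_\tau \equiv 1$ eliminates the $t=\tau$ term and, for $t\le \tau-1$, $\|\hat G_t - G_t\| = o_p(1)$ combined with positivity gives $\|1-\hat G_t/G_t\|=o_p(1)$; since $\|m_{t,\hat\P,\P_0} - m_t\|=O_p(1)$ under boundedness, the whole sum is $o_p(1)$. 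For the first sum, I would split at the index $k$: for $1 < t \le k$, the assumption $\|\hat\alpha_t - \alpha_t\| = o_p(1)$ makes the $t$-th product $o_p(1)$ directly; for $t>k$ (and for $t=1$, since $k\ge 1$), I would argue by backward induction that $\|m_{t,\hat\P,\P_0} - m_t\| = o_p(1)$, using consistency of $\tilde m_s$ for $s>k$ and of $\hat G_s$ for $s\le \tau-1$ so that each mixed sequential regression iterate inherits $L^2$-consistency from the next.

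The hard part will be making the backward-induction on $m_{t,\hat\P,\P_0}$ precise: unpacking the mixed notation as a composition of conditional expectations alternating between $\hat\P$ and $\P_0$, and verifying that its $L^2$-error at stage $t$ depends only on the $\tilde m_s$ and $\hat G_s$ for $s\ge t$, so the inductive hypothesis closes. A secondary subtlety is the boundary case $t=1$ of the first sum, which must be routed through the induction rather than through a direct consistency assumption on $\hat\alpha_1$ or $\tilde m_1$; the condition $k\ge 1$ built into the statement is precisely what ensures $\tilde m_s \to m_s$ for $s \ge 2$, closing the induction at $t=1$. Once these bounds are in place, Slutsky's theorem delivers $\hat\theta_{tmle} - \theta = o_p(1)$.
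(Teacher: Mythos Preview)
The paper does not supply an explicit proof of this theorem; the intended argument is to read the result off the von-Mises expansion and the explicit second-order remainder given in Theorem~\ref{thm:eif} (derived in the appendix). Your overall strategy---decompose $\hat\theta_{tmle}-\theta$ into an empirical-process term, a term that TMLE drives to zero, and the remainder $\Rem$, then bound $\Rem$ termwise via Cauchy--Schwarz---is exactly the route the paper's framework points to, and your treatment of the $G$-sum and of the indices $1<t\le k$ and $t>k$ in the $\alpha$-sum is correct.

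There is, however, a genuine gap at $t=1$. You write that ``the condition $k\ge 1$ built into the statement is precisely what ensures $\tilde m_s\to m_s$ for $s\ge 2$, closing the induction at $t=1$.'' That is false unless $k=1$: the hypothesis is $\|\tilde m_t-m_t\|=o_p(1)$ for $t>k$, so for $k>1$ you do \emph{not} get $\tilde m_2\to m_2$, and your backward induction halts at $t=k+1$, never reaching $t=1$. Moreover, in the paper's (appendix) convention the $m$-factor in the remainder is $m_{t,\P_0,\hat\P}-\tilde m_t$ with $m_{t,\P_0,\hat\P}=\E_{\P_0}[\tilde m_{t+1}(A_{t+1}^d,H_{t+1})\mid A_t,H_t,\underline A_{t+1}\in\underline{\mathcal B}_{t+1}]$; even when $\tilde m_2$ is consistent this only gives $m_{1,\P_0,\hat\P}\to m_1$, whereas $\tilde m_1$ itself is not assumed consistent (since $1\not>k$), so the difference need not vanish. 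In short, as stated the $t=1$ summand of the $\alpha$-remainder is not controlled by either factor: $\hat\alpha_1$ is excluded by the ``$1<t\le k$'' clause, and $\tilde m_1$ is excluded by the ``$t>k$'' clause. Your proposal does not close this case; either an additional argument specific to $t=1$ is required, or the hypothesis on $\hat\alpha_t$ must extend to $t=1$ (i.e., $1\le t\le k$), which is the standard form of the $\tau+1$ multiply robust condition in the literature the paper cites.
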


It is also theoretically possible to construct estimator with stronger multiple robustness properties, similar to the sequentially double robust (SDR) estimator proposed for LMTPs \citep{diaz2023lmtp}. This estimator would require cumulative density ratios of the form $\prod_{t=s}^k r_{s}(A_s, H_s)$ for all combinations of $1 \leq t < \tau$ and $t < s \leq \tau$. \cite{diaz2023lmtp} estimated these ratios by plugging in estimates of $r_t$ for $t \in \{1, \dots, \tau \}$. While it is possible to use the Riesz representer approach to estimate these cumulative ratios directly and develop an SDR estimator, computation costs may be prohibitive for large $\tau$ as it would require solving \eqref{eq:empirical-riesz} a total of $\tau(\tau - 1) / 2$ times. In comparison, for the TMLE algorithm developed, we require solving \eqref{eq:empirical-riesz} only $\tau$ times.

\section{Simulation studies}
\label{section:simulation}

We investigate the finite-sample performance of the proposed estimators for the \GATT parameter through two simulation studies, each probing the estimators from a different angle. Simulation study 1 looks at the robustness of the TMLE estimator under inconsistent nuisance parameter estimation. Simulation study 2 compares how the Riesz representer estimation strategy compares to estimating individual density ratios. 
\ifx\anonymized\undefined
Reproduction materials for the simulation studies are available at \url{https://github.com/herbps10/gatt_paper}, drawing on an implementation of our proposed methods in an R package available at \url{https://github.com/nt-williams/lmtp/tree/riesz} \citep{R}.
\else
Reproduction materials for the simulation studies are available at (anonymized), drawing on an implementation of our proposed methods in an R package available at (anonymized) \citep{R}.
\fi

\subsection{Simulation study 1}
\label{section:simulation-study-1}
In the first simulation study we investigate the performance of the TMLE estimator for a \GATT parameter defined for a categorical treatment. The setup is adapted from the simulation study presented in \cite{diaz2023lmtp}. The data generating process is given by
\begin{align}
    L_1 &\sim \mathrm{Categorical}(0.5, 0.25, 0.25), \\
    A_1 | L_1 &\sim \mathrm{Binomial}(5, \mathrm{logit}^{-1}(-0.3 L_1)), \\
    L_t | (\bar{A}_{t-1},\bar{L}_{t-1}) &\sim \mathrm{Bernoulli}(\mathrm{logit}^{-1}(-0.3 L_{t-1} + 0.5 A_{t-1})) \text{ for } t\in\{2, 3, 4\}, \\
    A_t | (\bar{A}_{t-1}, \bar{L}_{t}) &\sim \mathrm{Binomial}(5, \mathrm{logit}^{-1}(-2.5 + A_{t-1} + 0.5 L_{t})) \text{ for } t \in \{ 2, 3, 4 \}, \\
    Y | (\bar{A}_4, \bar{L}_4) &\sim \mathrm{Bernoulli}(\mathrm{logit}^{-1}(-1 + 0.5 A_4 - L_4 )).
\end{align}
The MTP is defined as
\begin{align}
    d(a_t, h_t) = \begin{cases}
        a_t - 1 & \text{if } a_t \geq 1, \\
        a_t     & \text{if } a_t < 1. \\
    \end{cases}
\end{align}
We define two \GATT parameters conditional on the final treatment assignment. That is, we set $\mathcal{B}_4^a = \{ a \}$ for $a \in \{ 0, 1, \dots, 5 \}$. We created $200$ datasets for each sample size   $N = \{ 500, 1000, 2000 \}$ by independently sampling from the data generating process. The nuisance estimators for $m_t$ and $\alpha_t$ were estimated differently in four scenarios:
\begin{enumerate}
    \item $m_t$ and $\alpha_t$ estimated consistently.
    \item $m_t$ estimated consistently for $t > 2$ and inconsistently otherwise; $\alpha_t$ consistently for $t \leq 2$ and inconsistently otherwise.
    \item $m_t$ estimated consistently for $t < 4$ and inconsistently for $t = 4$; $\alpha_t$ estimated consistently for $t = 4$ and inconsistently otherwise.
    \item $m_t$ and $\alpha_t$ estimated inconsistently.
\end{enumerate}
Consistent estimation of $m_t$ was achieved by Super Learning with \texttt{SL.mean} and \texttt{SL.glm} learners, where the generalized linear model was correctly specified. For $\alpha_t$, we used a linear model including all first-order interaction terms (details are provided in Appendix~A7
%\ref{appendix:nn-details}
). For the inconsistent cases, only \texttt{SL.mean} was used for estimating $m_t$ and $\alpha_t$ was fixed to one. In all cases the conditional probability of inclusion in the conditioning set (the nuisance parameter $G_t$) was estimated with Super Learner with \texttt{SL.mean} and \texttt{SL.ranger} learners.

A subset of the results covering parameter estimates when the conditioning set is $\mathcal{B}_4^a = \{ 1 \}$ are shown in Table~\ref{tab:simulation-results-1}; results for all other conditioning sets are available in Appendix~A8
%\ref{appendix:additional-results}
. The TMLE estimator achieved near zero mean error and mean absolute error for nearly all combinations of sample size and conditioning sets in scenario (1), where all nuisance parameters are estimated consistently. Empirical coverage was near optimal at the largest sample size, and conservative at smaller sample sizes. In scenario (2), IPW appears to be inconsistent, as expected because $\alpha_4$ is estimated inconsistently. As expected based on the $\tau + 1$ robustness result (Theorem~\ref{theorem:robustness}), TMLE has good performance in terms of error metrics. Surprisingly, the substitution estimator also has good performance in this example. In scenario (3), TMLE also has good performance both in terms of empirical coverage and error metrics, which is unexpected as the scenario does not fulfill the requirements of Theorem~\ref{theorem:robustness}. We hypothesize that estimating $\alpha_4$ correctly, which is the cumulative densities from $t = 1$ to $t=4$, allows TMLE to correct for the inconsistent estimation of $m_4$.   Finally, in scenario (4) all estimators have poor performance due to the inconsistent estimation of both nuisance parameters.

\ifx\figurestables\undefined 
\else
\begin{table}[ht]
    \centering
    \begin{tabular}{|rrrrrrrrrr|}
        \hline
        & & & 95\% Coverage & \multicolumn{3}{c}{MAE $\times$ 100} & \multicolumn{3}{c|}{ME $\times$ 100} \\
        $\mathcal{B}^a$ & Scenario & $N$ & TMLE & IPW &  Sub & TMLE & IPW & Sub & TMLE  \\
        \hline
        \{ 1 \} & 1 & 500 & 97.5\% & 18.66 & 4.53 & 6.39 & -17.59 & -0.96 & 0.03\\
         &  & 1000 & 97.5\% & 6.32 & 3.07 & 4.32 & -4.23 & -0.01 & 0.66\\
         &  & 2500 & 97.0\% & 3.16 & 1.82 & 2.47 & -2.04 & -0.01 & -0.05\\
         &  & 5000 & 94.0\% & 2.04 & 1.41 & 1.92 & -1.12 & -0.06 & -0.11\\
         & 2 & 500 & 74.5\% & 15.68 & 4.41 & 4.60 & -15.68 & -1.43 & -1.19\\
         &  & 1000 & 76.0\% & 15.95 & 2.91 & 3.23 & -15.95 & -0.09 & -0.23\\
         &  & 2500 & 77.0\% & 15.72 & 1.79 & 1.93 & -15.72 & 0.12 & 0.01\\
         &  & 5000 & 74.5\% & 15.77 & 1.35 & 1.52 & -15.77 & 0.04 & -0.04\\
         & 3 & 500 & 74.5\% & 24.36 & 15.72 & 7.60 & -9.63 & -15.72 & -2.94\\
         &  & 1000 & 76.5\% & 14.42 & 15.97 & 7.34 & -1.74 & -15.97 & -4.42\\
         &  & 2500 & 62.0\% & 14.76 & 15.73 & 6.60 & -2.87 & -15.73 & -4.75\\
         &  & 5000 & 53.5\% & 11.85 & 15.77 & 5.63 & -1.89 & -15.77 & -4.79\\
         & 4 & 500 & 0.0\% & 15.68 & 15.73 & 15.72 & -15.68 & -15.73 & -15.72\\
         &  & 1000 & 0.0\% & 15.95 & 15.96 & 15.96 & -15.95 & -15.96 & -15.96\\
         &  & 2500 & 0.0\% & 15.72 & 15.73 & 15.73 & -15.72 & -15.73 & -15.73\\
         &  & 5000 & 0.0\% & 15.77 & 15.77 & 15.77 & -15.77 & -15.77 & -15.77\\
        \hline
    \end{tabular}
    \caption{Results of Simulation Study 1 showing empirical coverage of the 95\% confidence intervals, Mean Absolute Error (MAE), and Mean Error (ME) for the inverse probability weighted estimator (IPW), substitution estimator (Sub), and Targeted minimum loss-based estimator (TMLE).}
    \label{tab:simulation-results-1}
\end{table}
\fi

\subsection{Simulation study 2}
The second simulation study investigates how the Riesz representation approach for estimating the cumulated densities $\alpha_t$ compares to estimating each ratio $r_t$ and then cumulating them to form estimates of $\alpha_t$. As such, we consider a relatively simple longitudinal data generating process and modified treatment policy because the main variable of interest in this simulation study is the number of time points.
The data generating process is given by
\begin{align}
    L_1 &\sim \mathrm{Uniform}(0, 1), \\
    A_1 | L_1 &\sim \mathrm{Bernoulli}(0.5), \\
    L_t | (\bar{A}_{t-1},\bar{L}_{t-1}) &\sim \mathrm{Normal}(0.25L_{t-1} , 0.5^2) \text{ for } t\in\{2, \dots, \tau\}, \\
    A_t | L_t &\sim \mathrm{Bernoulli}(\mathrm{logit}^{-1}(0.5 + 0.1 L_t)), \\
    Y | A_\tau, L_\tau &\sim \mathrm{Normal}(A_\tau + L_\tau, \sigma^2).
\end{align}
The modified treatment policy was defined as $d(a_t, h_t) = 1$. The LMTP parameter was not conditioned on treatment (or, equivalently, $\mathcal{B}_t = \{0, 1\}$ for all $t$). 
We applied correctly specified generalized linear models to estimate the outcome regression models. We estimated $r_t$ based on correctly specified logistic regressions which were then cumulated to form estimates of $\alpha_t$ (we refer to this as the ``TMLE plug-in" method,
%\id{Can we label this ``TMLE-plug-in'' or something like that? Just to be clear that this should still be an optimal estimator and that we are comparing apples to apples} 
as estimates $\hat{r}_t$ are plugged in to the definition of $\alpha_t$).

Simulation results are shown in Table~\ref{tab:simulation-results-2} and Figure~\ref{fig:simulation-results-2}. For sample sizes $N = 500$ and $N = 1000$, the TMLE RR estimator maintains near optimal empirical coverage for all $\tau$, while the performance of the TMLE Plug-in estimator suffers as $\tau$ increases. A similar pattern is seen for the mean absolute error. The instability of the Plug-in estimator can be seen via the standard deviation of the estimated cumulative weights $\hat{\alpha}_\tau$ which are larger for the TMLE plug-in estimator than the TMLE RR estimator.

\ifx\figurestables\undefined 
\else
\begin{table}[ht]
    \centering
    \begin{tabular}{|rrrrrrrr|}
        \hline
        & & \multicolumn{2}{c}{95\% Coverage} & \multicolumn{2}{c}{MAE $\times$ 100} & \multicolumn{2}{c|}{$\mathrm{sd}(\hat{\alpha}_\tau)$} \\
        $N$ & $\tau$ & TMLE RR & TMLE Plug-in & TMLE RR & TMLE Plug-in & TMLE RR & TMLE Plug-in \\
        \hline
        500 & 2 & 96.5\% & 97.0\% & 3.75 & 3.85 & 1.42 & 1.59\\
         & 4 & 93.0\% & 93.5\% & 5.88 & 7.15 & 2.21 & 3.20\\
         & 6 & 93.0\% & 94.5\% & 6.97 & 13.73 & 2.20 & 6.16\\
         & 8 & 89.5\% & 95.0\% & 7.65 & 22.88 & 2.62 & 11.12\\
         & 10 & 92.5\% & 87.0\% & 7.81 & 31.48 & 2.84 & 18.91\\
         & 12 & 96.5\% & 71.5\% & 7.85 & 31.68 & 3.48 & 28.06\\
         & 14 & 98.5\% & 47.0\% & 9.16 & 36.39 & 4.52 & 30.75\\
        1000 & 2 & 91.5\% & 93.5\% & 3.21 & 3.26 & 1.41 & 1.57\\
         & 4 & 94.5\% & 95.0\% & 4.10 & 4.94 & 2.25 & 3.12\\
         & 6 & 88.5\% & 95.0\% & 5.46 & 9.13 & 2.20 & 5.78\\
         & 8 & 92.5\% & 93.5\% & 5.40 & 16.35 & 2.49 & 10.32\\
         & 10 & 88.5\% & 87.0\% & 6.25 & 27.95 & 2.52 & 18.21\\
         & 12 & 93.5\% & 85.5\% & 6.00 & 40.62 & 2.87 & 28.99\\
         & 14 & 96.0\% & 58.5\% & 6.21 & 35.31 & 3.62 & 34.89\\
        2000 & 2 & 95.5\% & 97.0\% & 2.04 & 2.08 & 1.40 & 1.55\\
         & 4 & 95.0\% & 95.0\% & 3.11 & 3.59 & 2.29 & 3.07\\
         & 6 & 94.0\% & 96.5\% & 3.49 & 6.16 & 2.24 & 5.61\\
         & 8 & 90.5\% & 96.5\% & 4.01 & 9.96 & 2.46 & 9.89\\
         & 10 & 91.5\% & 90.5\% & 3.86 & 21.79 & 2.41 & 17.47\\
         & 12 & 90.5\% & 82.5\% & 4.18 & 31.81 & 2.59 & 29.09\\
         & 14 & 93.0\% & 74.0\% & 4.93 & 33.63 & 3.07 & 39.81\\
        \hline
    \end{tabular}
    \caption{Results of Simulation Study 2 comparing the performance of the TMLE estimator with Riesz representers $\alpha_t$ estimated using empirical loss minimization (TMLE RR) and via plug-in estimation (TMLE Plug-in). The estimators are compared by their empirical 95\% confidence interval coverage, mean absolute error (MAE), and the mean standard deviation of the Riesz representers at time $\tau$ ($\mathrm{sd}(\hat{\alpha}_\tau)$).}
    \label{tab:simulation-results-2}
\end{table}

\begin{figure}
    \centering
    \includegraphics[width=\columnwidth]{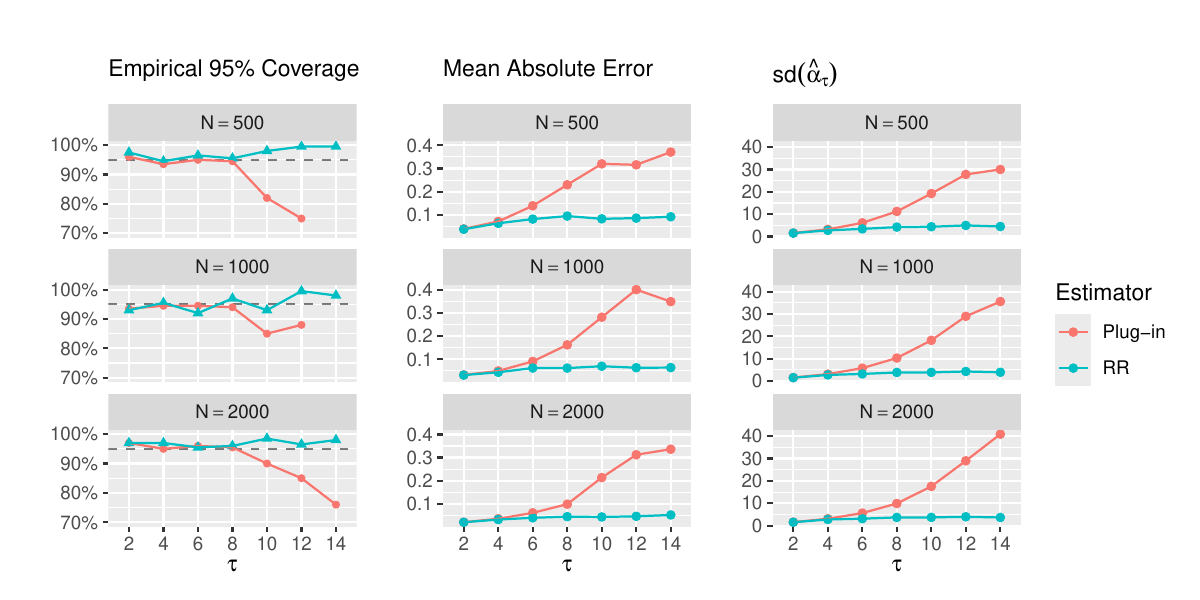}
    \caption{Results of Simulation Study 2.}
    \label{fig:simulation-results-2}
\end{figure}

\fi

\section{Application}
\label{section:application}

We revisit an application recently considered by \citet{rudolph2024associations}. In this recent work, the authors considered a set of nine chronic pain management strategies $A = (A_1 ,A_2, \dots, A_9)$ and were interested in the extent to which hypothetically intervening on each (as opposed to no intervention) would increase risk of developing opioid use disorder (denoted $Y$) over the subsequent 12 months among Medicaid beneficiaries with chronic pain. For this example, the authors assumed observed data $O = (W, A, C, CY)$, where $W$ denotes baseline covariates measured during the initial 6-months of Medicaid enrollment that served as a washout period, $A$ were the treatments measured during the subsequent 6 months, $C$ indicates whether the patient was still enrolled by month 24 when the outcome was measured, and $Y$ is the outcome of a new diagnosis of opioid use disorder (OUD) by 24 months post-enrollment, which is observed among those who remain enrolled.

Two out of these nine treatments are features of opioid prescribing: maximum daily opioid prescription dose (measured in morphine milligram equivalents (MME), denoted $A_1$) and proportion of days covered by an opioid prescription out of the 6-month period (denoted $A_2$). These two treatments are of policy relevance in terms of prescribing guidelines, published by both the Centers for Disease Control and Prevention and Veterans Affairs \citep{dowell2022cdc,rosenberg2018opioid}. For each of these two continuous variables, the authors considered a hypothetical intervention that was a multiplicative shift that increased the observed treatment variable by 20\%, leaving the remaining treatment variables at their observed values, denoted $d_1(a) = (1.2\times a_1, a_2, \dots, a_9)$ and 2) $d_2(a) = (a_1, 1.2\times a_2, \dots, a_9)$, and setting everyone to uncensored (denoted $C=1$). \citet{rudolph2024associations} then estimated the effect of each of these two interventions among the entire population of chronic pain patients enrolled in Medicaid, $\E[Y(A^{d_k},C=1)] / \E[Y(A,C=1)]$ for $k \in \{1,2\}$. 

However, only 40\% of this cohort of chronic pain patients reported any opioid pain prescription during the 6 month treatment period \citep{rudolph2024associations}. The intervention $d_k$ would keep these individuals at their observed value of 0 for each of the treatments, and they would contribute a 0 to the overall effect. 
It may be more policy relevant to estimate the effect of shifting these variables only among those individuals who actually received an opioid prescription, as this would be the population impacted by changes in opioid prescribing guidelines to reduce dose and duration. 

Estimating the effect of shifts in opioid dose and duration among those patients who received an opioid is an example of a \GATT. We can denote this estimand: $\E[Y(A^{d_k},C=1) \mid A_k>0] / \E[Y(A,C=1) \mid A_k>0]$ for $k \in \{1,2\}$. We use the TMLE estimator proposed in Section 4.2 to estimate the above two quantities; the results are reported in Table \ref{tab:app}. Outcome regression nuisance parameters were estimated using the super learner algorithm, with candidates including an intercept-only model, a main-effects GLM, boosting, random forests, a single-layer neural network, and multivariate adaptive regression splines. Riesz representers were estimated with an L2-regularized multilayer perceptron using the Adam optimizer. As in the previous work \citep{rudolph2024associations}, we find that higher doses 
increase risk of incident OUD. Specifically, increasing dose by 20\% would be expected to increase risk of OUD by 4.9\% (4.4-5.5\%), among chronic pain patients with an opioid prescription.  

\ifx\figurestables\undefined
\else
\begin{table}[ht]
\centering
\caption{Estimated effects of increasing each of opioid dose and duration (vs. leaving them at their observed levels) on risk of incident opioid use disorder among chronic pain patients in Medicaid. Point estimates and and 95\% CIs.} 
\label{tab:app}
\begin{tabular}[t]{p{.6cm}cccccc}
\toprule
\multicolumn{1}{c}{ } & \multicolumn{2}{c}{$\E[Y(A^{d_k},C=1) \mid A_k>0]$} & \multicolumn{2}{c}{$\E[Y(A,C=1) \mid A_k>0]$.} & \multicolumn{2}{c}{Risk Ratio} \\
 & Est. & 95\% CI & Est. & 95\% CI & Est. & 95\% CI\\
\midrule
$d_1$ & 0.044 & 0.043, 0.045 & 0.042 & 0.041, 0.043 & 1.049 & 1.044, 1.055\\
$d_2$ & 0.036 & 0.035, 0.037 & 0.036 & 0.035, 0.037 & 1.001 & 0.999, 1.004\\
\bottomrule
\end{tabular}
\end{table}
\fi

\section{Discussion}
\label{section:discussion}
In this work we introduced \GATTs, which generalize the ATT to multi-valued or continuous treatments, longitudinal data structures, and complex interventions posed in the form of MTPs. The \GATT is identifiable in the form of sequential regressions under a familiar strong sequential randomization assumption, and a positivity assumption that, echoing the case for the single time point ATT, is weaker than that required for comparable population longitudinal parameters. 

It is notable that in order for the \GATT parameter to be identifiable, it is defined conditional on the longitudinal natural value of treatment falling in a set, formally the event $\bar{A}(d) \in \bar{\mathcal{B}}$. An alternative, perhaps more obvious generalization of the ATT would be to condition on the observed treatment trajectory falling in the conditioning set: $\bar{A} \in \bar{\mathcal{B}}$. While conditioning on the observed longitudinal treatments is more intuitive, we found that doing so leads to a causal parameter that is not in general causally identifiable. Fortunately, Condition~\ref{condition:comparable} provides a framework for constructing parameters that are both interpretable and identifiable that are also of wide scientific relevance.

Using tools from semi-parametric efficiency theory, we derived the form of the EIF of the \GATT, which defines its semi-parametric efficiency bound. Interestingly, the EIF involves the nuisance parameter $G_t$, which is the conditional probability at each time point of future observations falling within the conditioning set conditional on the present and past. This leads to estimators that have more complex multiple-robustness properties than estimators for unconditional LMTP parameters. Specifically, Theorem~\ref{theorem:robustness} shows that in order for the proposed TMLE estimator to be consistent, $G_t$ must be estimated consistently at all time points, while the outcome regression $m_t$ can be inconsistent after a certain time point so long as the Riesz representer $\alpha_t$ is subsequently consistently estimated. Thus, it is critical that $G_t$ be estimated using data-adaptive algorithm or via an ensemble algorithms to have a better chance of consistent estimation. 

A key element of our approach lies in estimating the cumulated ratios $\alpha_t$ using empirical loss minimization based on interpreting $\alpha_t$ as a Riesz representer. The results of Simulation Study 2 show that this approach has empirical benefits over estimating the conditional treatment probabilities (or densities) directly and then plugging them in to the expression for $\alpha_t$ \eqref{eq:alpha_t}.  While both methods of estimating Riesz representers appear to have yielded a consistent estimator of the \GATT, as expected theoretically, the estimates based on Riesz loss minimization were more stable and converged faster. %However, as shown in Simulation Study 3, this approach is not a panacea, and can lead to unstable estimates in cases of extreme practical positivity violations. 
There appears to be a careful balance required in choosing the Riesz representer candidate space $\tilde{\mathcal{A}}$ for the minimization problem: it should be large enough to contain the true Riesz representer functional, but not be so large as to allow overfitting. 
For this reason we relied on ensemble learning, which chooses between multiple candidate estimators of $\alpha_t$ based on cross-validation, in order to avoid overfitting. We focused primarily on deep neural networks and linear models as underlying Riesz learners. Tree-based methods have also been proposed \citep{chernuzhukov2022riesznet}, and we look forward to future work in this area as the ultimate utility of this approach depends on the availability of practical data-adaptive algorithms. 

\ifx\anonymized\undefined 
    \subsection*{Acknowledgements}
    Iv\'an D\'iaz and Kara Rudolph were supported through a Patient-Centered Outcomes Research Institute (PCORI) Project Program FundingAward (ME-2021C2-23636-IC). The computational requirements for this work were supported in part by the NYU Langone High Performance Computing (HPC) Core's resources and personnel.
\fi

\ifx\arxiv\undefined 
\else
\appendix
\section{Appendix}
\label{section:appendix}
    \subsection{Proof of Theorem 1}
\label{appendix:theorem-1}
Recall the definition of the longitudinal natural value of treatment: $\bar A(d) = (A_1, A_2(d_1), ..., A_\tau(d_{\tau-1}))$. Begin by rewriting the parameter as
{\footnotesize
\begin{align*}
    & \Ec\left[ Y(\bar{A}^d) \mid \bar{A}(d) \in \bar{\mathcal{B}} \right] = \Ec\left[ \frac{\one\{A_1 \in \mathcal{B}_1\}}{P(A_1 \in \mathcal{B}_1\mid \underline{A}_2(d) \in \mathcal{\underline{B}}_2)}Y(\bar{A}^d) \mid \underline{A}_2(d) \in \mathcal{\underline{B}}_2 \right].  \\
    \intertext{Next, apply the law of iterated expectations and simplify:}
    &= \int_{\mathcal{A}_1,\mathcal{L}_1}  \Ec\left[\frac{\one\{A_1 \in \mathcal{B}_1\}}{P(A_1 \in \mathcal{B}_1\mid \underline{A}_2(d) \in \mathcal{\underline{B}}_2)} Z_1(a_1^d, l_1) \mid A_1 = a_1, L_1 = l_1, \underline{A}_2(d)\in \mathcal{\underline{B}}_2 \right] d\P(a_1, l_1\mid \underline{A}_2(d) \in \mathcal{\underline{B}}_2) \\
    &= \int_{\mathcal{A}_1,\mathcal{L}_1} \one\{a_1 \in \mathcal{B}_1\}\frac{P(\underline{A}_2(d) \in \mathcal{\underline{B}}_2)}{P(\underline{A}_1(d) \in \mathcal{\underline{B}}_1)} \Ec\left[ Z_1(a_1^d, l_1) \mid A_1 = a_1, L_1 = l_1, \underline{A}_2(d)\in \mathcal{\underline{B}}_2 \right] d\P(a_1, l_1\mid \underline{A}_2(d) \in \mathcal{\underline{B}}_2).
    \intertext{Applying $\one\{a_1 \in \mathcal{B}_1\} = P(A_1 \in \mathcal{B}_1\mid A_1=a_1, L_1=l_1, \underline{A}_2(d) \in \mathcal{\underline{B}}_2)$, we get}
    &= \int_{\mathcal{A}_1,\mathcal{L}_1} \Ec\left[ Z_1(a_1^d, l_1) \mid A_1 = a_1, L_1 = l_1, \underline{A}_2(d)\in \mathcal{\underline{B}}_2 \right] d\P(a_1, l_1\mid \underline{A}_1(d) \in \mathcal{\underline{B}}_1)\\
        &= \int_{\mathcal{A}_1,\mathcal{L}_1} \frac{1}{P( \underline{A}_2(d)\in \mathcal{\underline{B}}_2 \mid A_1=a_1, L_1=l_1)}\Ec\left[ Z_1(a_1^d, l_1)\one[ \underline{A}_2(d)\in \mathcal{\underline{B}}_2 ] \mid A_1 = a_1, L_1 = l_1\right] d\P(a_1, l_1\mid \underline{A}_1(d) \in \mathcal{\underline{B}}_1)
    \intertext{By Lemma~\ref{lemma:conditional-independence} (below) and Assumptions~\ref{assumption:positivity} and \ref{assumption:strong-sequential-randomization},}
        &= \int_{\mathcal{A}_1,\mathcal{L}_1} \frac{1}{P( \underline{A}_2(d)\in \mathcal{\underline{B}}_2 \mid A_1=a_1, L_1=l_1)}\Ec\left[ Z_1(a_1^d, l_1)\one[ \underline{A}_2(d)\in \mathcal{\underline{B}}_2 ] \mid A_1 = a_1^d, L_1 = l_1\right] d\P(a_1, l_1\mid \underline{A}_1(d) \in \mathcal{\underline{B}}_1)\\  
        &= \int_{\mathcal{A}_1,\mathcal{L}_1} \frac{P( \underline{A}_2(d)\in \mathcal{\underline{B}}_2 \mid A_1=a_1^d, L_1=l_1)}{P( \underline{A}_2(d)\in \mathcal{\underline{B}}_2 \mid A_1=a_1, L_1=l_1)}\Ec\left[ Z_1(a_1^d, l_1)\mid A_1 = a_1^d, L_1 = l_1, \underline{A}_2(d)\in \mathcal{\underline{B}}_2\right] d\P(a_1, l_1\mid \underline{A}_1(d) \in \mathcal{\underline{B}}_1)\\  
&= \int_{\mathcal{A}_1,\mathcal{L}_1} \Ec\left[ Z_1(a_1^d, l_1)\mid A_1 = a_1^d, L_1 = l_1, \underline{A}_2(d)\in \mathcal{\underline{B}}_2\right] d\P(a_1, l_1\mid \underline{A}_1(d) \in \mathcal{\underline{B}}_1)
    \intertext{Continue by applying the same steps to the inner expectation:}
    &= \int_{\bar{\mathcal{A}}_2,\bar{\mathcal{L}}_2} \Ec\left[ Z_1(a_1^d, l_1) \mid A_2 = a_2, L_2 = l_2, A_1 = a_1^d, L_1 = l_1, \underline{A}_2(d) \in \underline{\mathcal{B}}_2 \right] \times \\
    &\times d\P(a_2, l_2 \mid A_1=a_1^d, L_1=l_1, \underline{A}_2(d) \in \mathcal{\underline{B}}_2) d\P(a_1, l_1 \mid \underline{A}_1(d) \in \mathcal{B}_1)\notag
    \intertext{In the event $A_{t+1} = a_{t+1}, H_{t+1} = h^d_{t+1}$ then $Z_{t}(\bar{a}_t^d, \bar{l}_t) = Z_{t+1}(\bar{a}_{t+1}^d, \bar{l}_{t+1})$. Therefore:}
    &= \int_{\bar{\mathcal{A}}_2,\bar{\mathcal{L}}_2} \Ec\left[ Z_2(\bar a_2^d, \bar l_2) \mid A_2 = a_2, L_2 = l_2, A_1 = a_1^d, L_1 = l_1, \underline{A}_2(d) \in \underline{\mathcal{B}}_2 \right] \times \\
    &\times d\P(a_2, l_2 \mid A_1=a_1^d, L_1=l_1, \underline{A}_2(d) \in \mathcal{\underline{B}}_2) d\P(a_1, l_1 \mid \underline{A}_1(d) \in \mathcal{B}_1)
    \intertext{Continue applying the previous steps until arriving at}
    &= \int_{\bar{\mathcal{A}}_\tau,\bar{\mathcal{L}}_\tau} \Ec\left[ Z_\tau(\bar{a}_\tau^d, \bar{l}_\tau) \mid A_\tau = a_\tau^d, H_\tau = h_\tau^d, \underline{A}_{\tau + 1}(d) \in \underline{\mathcal{B}}_{\tau + 1} \right] \prod_{k=1}^\tau d\P(a_k, l_k \mid a_{k-1}^d, h_{k-1}^d, \underline{A}_k(d) \in \underline{\mathcal{B}}_k),
    \intertext{Finally, by applying the definition of $Z_t$:}
    &= \int_{\bar{\mathcal{A}}_\tau,\bar{\mathcal{L}}_\tau} \Ec\left[ Y \mid A_\tau = a_\tau^d, H_\tau = h_\tau^d, \underline{A}_{\tau + 1}(d) \in \underline{\mathcal{B}}_{\tau + 1} \right] \prod_{k=1}^\tau d\P(a_k, l_k \mid a_{k-1}^d h_{k-1}^d, \underline{A}_k(d) \in \underline{\mathcal{B}}_k), \label{eq:identification-integral}
\end{align*}}
Next, recursively apply the following starting with $t=\tau$ to arrive at the stated result:
\begin{align}
    \int_{\mathcal{A}_t, \mathcal{L}_t} m_t(a_t^d, h_t^d) d\P(a_t, l_t \mid a_{t-1}^d, h_{t-1}^d, \underline{A}_t \in \underline{\mathcal{B}}_t) &= \Ec\left[ m(A_t^d, H_t) \mid A_{t-1} = a_{t-1}^d, H_{t-1} = h_{t-1}^d, \underline{A}_t \in \underline{\mathcal{B}}_t \right] \\
    &= m_{t-1}(a_{t-1}^d, h_{t-1}^d). 
\end{align}

\begin{lemma}
    \label{lemma:conditional-independence}
    Given Assumption~\ref{assumption:strong-sequential-randomization}, for all $t$ it follows that $Z_t(\overline{a}_t^d, \overline{l}_t) \1[\underline{A}_{t+1}(d) \in \underline{\mathcal{B}}_{t+1}] \indep A_t \mid H_t=h_t^d$. 
\end{lemma}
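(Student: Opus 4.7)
The plan is to reduce the desired conditional independence to a direct application of the strong sequential randomization assumption (Assumption~\ref{assumption:strong-sequential-randomization}) by arguing that, conditional on $H_t = h_t^d$, the relevant random variables cleanly decompose according to the time index of the exogenous errors on which they depend: $A_t$ depends only on $U_{A,t}$, whereas $Z_t(\bar a_t^d, \bar l_t)$ and $\1\{\underline A_{t+1}(d) \in \underline{\mathcal B}_{t+1}\}$ depend only on the future noise $(\underline U_{L,t+1}, \underline U_{A,t+1})$.

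First I would use the structural equation $A_t = f_{A_t}(H_t, U_{A,t})$ to note that on the event $\{H_t = h_t^d\}$, $A_t$ is a deterministic function of $U_{A,t}$ alone. Second, I would establish by induction on $s = t+1, \dots, \tau$ that the counterfactual variables $L_s(\bar a_t^d, \bar l_t, \cdot)$ and $A_s(\bar a_t^d, \bar l_t, \cdot)$ appearing in the definitions of $Z_t(\bar a_t^d, \bar l_t)$ and $\underline A_{t+1}(d)$ are deterministic functions of $(U_{L,t+1}, U_{A,t+1}, \dots, U_{L,s}, U_{A,s})$, with the fixed history $(\bar a_t^d, \bar l_t)$ entering only as a constant input. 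The base case is $L_{t+1}(\bar a_t^d) = f_{L_{t+1}}(a_t^d, h_t^d, U_{L,t+1})$, and the inductive step follows immediately from the recursive form of $f_{L_s}$ and $f_{A_s}$ together with the recursive definition of an LMTP intervention $A_s^d = d(A_s(\bar{A}^d_{s-1}), H_s(\bar{A}^d_{s-1}))$. Consequently, both $Z_t(\bar a_t^d, \bar l_t)$ and $\1\{\underline A_{t+1}(d) \in \underline{\mathcal B}_{t+1}\}$ are measurable functions of $(\underline U_{L,t+1}, \underline U_{A,t+1})$ on $\{H_t = h_t^d\}$.

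Finally, Assumption~\ref{assumption:strong-sequential-randomization} gives $U_{A,t} \indep (\underline U_{L,t+1}, \underline U_{A,t+1}) \mid H_t$. Since conditional independence is preserved under measurable transformations of each side, combining this with the two measurability facts above yields
\begin{equation}
A_t \indep \bigl( Z_t(\bar a_t^d, \bar l_t),\, \1\{\underline A_{t+1}(d) \in \underline{\mathcal B}_{t+1}\} \bigr) \,\big|\, H_t = h_t^d,
\end{equation}
which in particular implies the stated joint-form independence.

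The main obstacle is carrying out the induction cleanly: one must carefully notate the counterfactual recursion in the presence of a partially realized history $h_t^d$ so that the factorization into $U_{A,t}$ versus $(\underline U_{L,t+1}, \underline U_{A,t+1})$ is transparent, and in particular verify that the indicator of the future natural treatments landing in $\underline{\mathcal B}_{t+1}$ never pulls in any dependence on $U_{A,t}$ beyond what is already absorbed into $h_t^d$. Once this bookkeeping is in place, the conditional independence step is immediate.
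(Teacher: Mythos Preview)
Your proposal is correct and follows essentially the same approach as the paper: the paper's proof is a one-line assertion that, on the event $\{H_t = h_t^d\}$, the product $Z_t(\bar a_t^d, \bar l_t)\,\1\{\underline A_{t+1}(d)\in\underline{\mathcal B}_{t+1}\}$ is a deterministic function of $(\underline U_{L,t+1}, \underline U_{A,t+1})$, after which Assumption~\ref{assumption:strong-sequential-randomization} is applied implicitly. Your inductive argument simply makes explicit the bookkeeping behind this measurability claim and the closing step that conditional independence is preserved under measurable transformations.
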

\begin{proof}
    Under the assumed structural causal model, in the event $H_t=h_t^d$, $Z_t(\bar{a}^d_t, \bar{l}_t) \1[\underline{A}_{t+1}(d) \in \underline{\mathcal{B}}_{t+1}]$ is a deterministic function of $(\underline{U}_{L, t+1}, \underline{U}_{A, t + 1})$. 
\end{proof}

\subsection{Additional Results}
\begin{lemma}
    \label{lemma:induction-step}
    For any $t \in \{1, \tau - 1\}$, it follows that
    \begin{align}
        m_t(a_t, h_t) = \E\left[ \frac{g^d_{t + 1,\mathcal{B}}(A_{t+1} \mid H_{t+1})}{g_{t+1}(A_{t+1} \mid H_{t+1})} m_{t+1}(A_{t+1}, H_{t+1}) \mid A_t = a_t, H_t = h_t \right]. 
    \end{align}
\end{lemma}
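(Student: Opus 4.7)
The plan is to start from the definition of $m_t$ and convert it into the claimed form through Bayes' rule and a change of variables. Recall $m_t(a_t, h_t) = \E[m_{t+1}(A_{t+1}^d, H_{t+1}) \mid A_t = a_t, H_t = h_t, \underline{A}_{t+1} \in \underline{\mathcal{B}}_{t+1}]$, so the task is to (i) shed the conditioning event $\underline{A}_{t+1} \in \underline{\mathcal{B}}_{t+1}$ and (ii) replace the shifted argument $A_{t+1}^d = d(A_{t+1}, H_{t+1})$ with the observed $A_{t+1}$, absorbing both moves into the density ratio $g^d_{t+1,\mathcal{B}} / g_{t+1}$.

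First, I would factor the conditional law of $(A_{t+1}, L_{t+1})$ given $(A_t = a_t, H_t = h_t, \underline{A}_{t+1} \in \underline{\mathcal{B}}_{t+1})$ via Bayes' rule. Using the definition of $G_t$ and the chain rule for the future event $\underline{A}_{t+1} \in \underline{\mathcal{B}}_{t+1}$, this conditional density equals
\begin{align}
f_{L_{t+1}}(l_{t+1}\mid a_t, h_t)\, g_{t+1}(a_{t+1}\mid h_{t+1})\cdot \frac{\one\{a_{t+1}\in\mathcal{B}_{t+1}\}\,G_{t+1}(a_{t+1}, h_{t+1})}{G_t(a_t, h_t)}.
\end{align}
Plugging this into the definition of $m_t(a_t, h_t)$ gives an integral of $m_{t+1}(d(a_{t+1}, h_{t+1}), h_{t+1})$ against this weighted density.

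Next, I would apply the change of variables $u = d(a_{t+1}, h_{t+1})$ within the integral over $a_{t+1}$, holding $h_{t+1}$ fixed. Assumption~\ref{assumption:piecewise-invertibility} partitions the support of $A_{t+1}$ given $H_{t+1}$ into intervals $\mathcal{I}_{j,t+1}$ on each of which $d$ has a differentiable inverse $b_{j,t+1}(\cdot, h_{t+1})$; the change of variables on each piece contributes a Jacobian $|b'_{j,t+1}(u, h_{t+1})|$ and an indicator that $b_{j,t+1}(u, h_{t+1}) \in \mathcal{I}_{j,t+1}(h_{t+1})\cap \mathcal{B}_{t+1}$. Summing over $j$ exactly reproduces the definition of $g^d_{t+1,\mathcal{B}}(u\mid h_{t+1})$ supplied in the paper, so the integral becomes $\int m_{t+1}(u, h_{t+1})\, g^d_{t+1,\mathcal{B}}(u\mid h_{t+1})\, du$.

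Finally, I would multiply and divide by $g_{t+1}(u\mid h_{t+1})$ inside the integrand to rewrite it as an expectation under the true law of $A_{t+1}\mid H_{t+1}$; combined with the outer integration against $f_{L_{t+1}}(l_{t+1}\mid a_t, h_t)$, this yields precisely the claimed conditional expectation. The discrete case is handled analogously with sums in place of integrals, matching the discrete definition of $g^d_{t+1,\mathcal{B}}$. The main obstacle is bookkeeping across the piecewise branches of $d$: verifying that the Jacobian factors, the indicators $\one\{b_{j,t+1}(u, h_{t+1})\in \mathcal{I}_{j,t+1}(h_{t+1})\}$, and the $\one\{\cdot\in\mathcal{B}_{t+1}\}$ factors align so the post-change-of-variables integrand is exactly $g^d_{t+1,\mathcal{B}}$. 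Positivity (Assumption~\ref{assumption:positivity}) ensures the ratio $g^d_{t+1,\mathcal{B}}/g_{t+1}$ is well defined on the support needed.
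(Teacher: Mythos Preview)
Your proposal is correct and follows essentially the same route as the paper's proof: Bayes' rule to convert the conditioning event into the factor $\one\{a_{t+1}\in\mathcal{B}_{t+1}\}G_{t+1}(a_{t+1},h_{t+1})/G_t(a_t,h_t)$, then a piecewise change of variables $u=d(a_{t+1},h_{t+1})$ under Assumption~\ref{assumption:piecewise-invertibility} (or the analogous double-sum trick in the discrete case) to assemble $g^d_{t+1,\mathcal{B}}$, and finally multiplying and dividing by $g_{t+1}$ to obtain the stated expectation. The only cosmetic difference is that the paper presents the discrete case first and the continuous case second, whereas you lead with the continuous argument.
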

\begin{proof}
    Begin by writing
    \begin{align}
        m_t(a_t, h_t) =& \E\left[ m_{t+1}(A_{t+1}^d \mid A_t = a_t, H_t = h_t, \underline{A}_{t+1} \in \underline{\mathcal{B}}_{t+1} \right] \qquad \text{(by def'n \eqref{eq:m_t})} \\
        =& \int_{a_{t+1}, l_{t+1}} m_{t+1}\left( d(a_{t+1}, h_{t+1}), h_{t+1} \right) d\P\left(a_{t+1},l_{t+1} | a_t, h_t, \underline{A}_{t+1} \in \underline{\mathcal{B}}_{t+1}\right) \\
        =& \int_{a_{t+1}, l_{t+1}} m_{t+1}\left( d(a_{t+1}, h_{t+1}), h_{t+1} \right) \frac{\P(\underline{A}_{t+1} \in \underline{\mathcal{B}}_{t+1} \mid a_{t+1}, h_{t+1})}{\P(\underline{A}_{t+1} \in \underline{\mathcal{B}}_{t+1} \mid a_{t}, h_{t})} d\P\left(a_{t+1},l_{t+1} | a_t, h_t \right) \qquad \text{(Bayes)} \\
        =& \int_{a_{t+1}, l_{t+1}} m_{t+1}\left( d(a_{t+1}, h_{t+1}), h_{t+1} \right) \\
        &\quad\quad \times\frac{\P(\underline{A}_{t+2} \in \underline{\mathcal{B}}_{t+2} \mid a_{t+1}, h_{t+1})\1(a_{t+1} \in \mathcal{B}_{t+1}) }{\P(\underline{A}_{t+1} \in \underline{\mathcal{B}}_{t+1} \mid a_{t}, h_{t})} d\P\left(a_{t+1},l_{t+1} | a_t, h_t \right). \label{eq:riesz-starting-point}
    \end{align}
    Next, consider separately the cases where $A_t$ is discrete and continuous.
    For discrete $A_t$, continue with
    \begin{align}
        \eqref{eq:riesz-starting-point} =& \int_{l_{t+1}} \sum_{a_{t+1}} m_{t+1}(d(a_{t+1}, h_{t+1}), h_{t+1}) \\
        &\qquad \frac{P(\underline{A}_{t+2} \in \underline{\mathcal{B}}_{t+2} \mid a_{t+1}, h_{t+1}) \1(a_{t+1}\in\mathcal{B}_{t+1})}{P(\underline{A}_{t+1} \in \underline{\mathcal{B}}_{t+1} \mid a_t, h_t)}
        g_{t+1}(a_{t+1} \mid h_{t+1}) d\P(l_{t+1} \mid a_{t}, l_{t}) \\
        =& \int_{l_{t+1}}  \sum_{a'_{t+1}} \sum_{a_{t+1}} \1\left\{ a_{t+1} \in \mathcal{B}_{t+1}, a'_{t+1} = d(a_{t+1}, h_{t+1}) \right\} m_{t+1}(a'_{t+1}), h_{t+1}) \\ 
        & \qquad \frac{P(\underline{A}_{t+2} \in \underline{\mathcal{B}}_{t+2} \mid a_{t+1}, h_{t+1}) \1(a_{t+1}\in\mathcal{B}_{t+1})}{P(\underline{A}_{t+1} \in \underline{\mathcal{B}}_{t+1} \mid a_t, h_t)} g_{t+1}(a_{t+1} \mid h_{t+1}) d\P(l_{t+1} \mid a_{t}, l_{t}) \\
        =& \int_{l_{t+1}} \sum_{a'_{t+1}} g_{t+1}(a'_{t+1} \mid h_{t+1}) m_{t+1}(a'_{t+1}, h_{t+1}) \frac{1}{g_{t+1}(a'_{t+1}\mid h_{t+1})} \sum_{a_{t+1}} \1\left\{ a_{t+1} \in \mathcal{B}_{t+1}, a'_{t+1} = d(a_{t+1}, h_{t+1}) \right\} \\
        & \qquad \frac{P(\underline{A}_{t+2} \in \underline{\mathcal{B}}_{t+2} \mid a_{t+1}, h_{t+1}) \1(a_{t+1}\in\mathcal{B}_{t+1})}{P(\underline{A}_{t+1} \in \underline{\mathcal{B}}_{t+1} \mid a_t, h_t)} g_{t+1}(a_{t+1} \mid h_{t+1}) d\P(l_{t+1} \mid a_{t}, l_{t}) \\
        &= \E\left[ \frac{g_{t+1,\mathcal{B}}^d(A_{t+1} \mid H_{t+1})}{g_{t+1}(A_{t+1} \mid H_{t+1})} m_{t+1}(A_{t+1}, H_{t+1}) \mid A_t = a_t, H_t = h_t \right].
    \end{align}
    For continuous $A_t$, similar logic applies, but we make use of Assumption~\ref{assumption:piecewise-invertibility}. Starting from \eqref{eq:riesz-starting-point}:
    {\small
    \begin{align}
        \eqref{eq:riesz-starting-point} =& \int_{l_{t+1}} \sum_{j=1}^{J_{t+1}(h_{t+1})} \int_{a_{t+1} \in \mathcal{I}_{j,t+1}} m_{t+1}\left( d(a_{t+1}, h_{t+1}), h_{t+1} \right) \\
        &\quad\quad \times \frac{\P(\underline{A}_{t+2} \in \underline{\mathcal{B}}_{t+2} \mid a_{t+1}, h_{t+1})\1(a_{t+1} \in \mathcal{B}_{t+1}) }{\P(\underline{A}_{t+1} \in \underline{\mathcal{B}}_{t+1} \mid a_{t}, h_{t})} d\P\left(a_{t+1},l_{t+1} | a_t, h_t \right) \\
        =& \int_{l_{t+1}} \sum_{j=1}^{J_{t+1}(h_{t+1})} \int_{a_{t+1} \in \mathcal{I}_{j,t+1}} m_{t+1}\left( d(a_{t+1}, h_{t+1}), h_{t+1} \right) \\
        & \quad\quad \times \frac{\P(\underline{A}_{t+2} \in \underline{\mathcal{B}}_{t+2} \mid a_{t+1}, h_{t+1})\1(a_{t+1} \in \mathcal{B}_{t+1}) }{\P(\underline{A}_{t+1} \in \underline{\mathcal{B}}_{t+1} \mid a_{t}, h_{t})} g_{t+1}(a_{t+1} \mid h_{t+1}) d\nu(a_2) d\P(l_2 \mid h_1) d\P\left(a_{t+1},l_{t+1} | a_t, h_t \right) \\
        =& \int_{l_{t+1}} \sum_{j=1}^{J_{t+1}(h_{t+1})} \int_{a'_{t+1} \in \mathcal{I}_{j,t+1}} m_{t+1}\left( a'_{t+1}, h_{t+1} \right) \\
        & \quad\quad \times \frac{\P(\underline{A}_{t+2} \in \underline{\mathcal{B}}_{t+2} \mid b_{j,t+1}(a'_{t+1}, h_{t+1}), h_{t+1})\1(b_{j,t+1}(a'_{t+1}, h_{t+1}) \in \mathcal{I}_{j, t+1}, b_{j,t+1}(a'_{t+1}, h_{t+1}) \in \mathcal{B}_{t+1}) }{\P(\underline{A}_{t+1} \in \underline{\mathcal{B}}_{t+1} \mid a_{t}, h_{t})} \\ 
        & \quad\quad \times g_{t+1}(b_{j,t+1}(a'_{t+1},h_{t+1}) \mid l_{t+1}, h_t) \left| b'_{j,2}(a'_2, h_2) \right| d\nu(a'_{t+1}) d\P(l_{t+1} \mid h_t) \\
        =& \E\left[ \frac{g^d_{t+1}(A_{t+1} \mid H_{t+1})}{g_{t+1}(A_{t+1} \mid H_{t+1})} m_{t+1}(A_{t+1}, H_{t+1}) \middle| A_t = a_t, H_t = h_t \right]. 
    \end{align}}
\end{proof}

\begin{lemma}
    \label{lemma:base-case}
    It holds that
    \begin{align}
        \E\left[ m_1(A_1^d, H_1) | \underline{A}_1 \in \underline{\mathcal{B}}_1 \right] = \E[\alpha_2(A_2, H_2) m_2(A_2, H_2)]. 
    \end{align}
\end{lemma}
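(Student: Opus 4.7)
The plan is to show that both sides of the identity equal the common intermediate expression $\E[r_1(A_1, H_1)\, m_1(A_1, H_1)]$, where $r_1 = g^d_{1,\mathcal{B}}/g_1$ is the first time point density ratio. This reduction unifies the two sides around a quantity that is directly amenable to the machinery already developed in Lemma~\ref{lemma:induction-step}.

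For the right-hand side direction, I would first use the factorization $\alpha_2 = r_1 \cdot r_2$ that follows from definition~\eqref{eq:alpha_t}, writing
\begin{align}
\E[\alpha_2(A_2, H_2)\, m_2(A_2, H_2)] = \E\big[ r_1(A_1, H_1)\, r_2(A_2, H_2)\, m_2(A_2, H_2) \big].
\end{align}
Then, by the tower property conditioning on $(A_1, H_1)$, Lemma~\ref{lemma:induction-step} applied at $t = 1$ immediately gives $\E[r_2(A_2, H_2)\, m_2(A_2, H_2) \mid A_1, H_1] = m_1(A_1, H_1)$, reducing the right-hand side to $\E[r_1(A_1, H_1)\, m_1(A_1, H_1)]$.

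For the left-hand side direction, the idea is to replay the proof of Lemma~\ref{lemma:induction-step} at the ``zeroth'' time point. Using Bayes' rule, together with $\1\{\underline{A}_1 \in \underline{\mathcal{B}}_1\} = \1\{A_1 \in \mathcal{B}_1\}\, \1\{\underline{A}_2 \in \underline{\mathcal{B}}_2\}$ and the tower rule giving $\E[\1\{\underline{A}_2 \in \underline{\mathcal{B}}_2\} \mid A_1, H_1] = G_1(A_1, H_1)$, we can write
\begin{align}
\E[m_1(A_1^d, H_1) \mid \underline{A}_1 \in \underline{\mathcal{B}}_1] = \frac{1}{\P(\underline{A}_1 \in \underline{\mathcal{B}}_1)}\, \E\big[\1\{A_1 \in \mathcal{B}_1\}\, G_1(A_1, H_1)\, m_1(A_1^d, H_1)\big].
\end{align}
Since the definition of $g^d_{1,\mathcal{B}}$ places precisely $G_0 = \P(\underline{A}_1 \in \underline{\mathcal{B}}_1)$ in its denominator, the same summation/change-of-variable manipulation used in the proof of Lemma~\ref{lemma:induction-step} — specialized to $t = 0$ — rewrites this expression as $\E[r_1(A_1, H_1)\, m_1(A_1, H_1)]$, matching the intermediate expression obtained from the right-hand side.

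The main obstacle is the continuous-treatment case in the second step, since collapsing $\E[\1\{A_1 \in \mathcal{B}_1\}\, G_1 \, m_1(A_1^d, H_1)]$ into the form $\E[r_1 \, m_1]$ requires partitioning the support of $A_1$ according to Assumption~\ref{assumption:piecewise-invertibility} and invoking the piecewise inverses $b_{j,1}(\cdot, h_1)$ and their Jacobians. Fortunately, this is precisely the substitution worked out in Lemma~\ref{lemma:induction-step}, so the argument reduces to transporting that computation to the initial conditioning event.
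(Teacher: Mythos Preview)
Your proposal is correct and follows essentially the same approach as the paper: both arguments pass through the common intermediate expression $\E[r_1(A_1,H_1)\,m_1(A_1,H_1)]$, reach it from the left-hand side via Bayes' rule plus the change-of-variables/summation argument (splitting into the discrete and continuous-$A_1$ cases under Assumption~\ref{assumption:piecewise-invertibility}), and connect it to the right-hand side by applying Lemma~\ref{lemma:induction-step} at $t=1$ together with the tower property. The only difference is organizational---the paper proceeds linearly from left to right, whereas you meet in the middle---but the ingredients and manipulations are identical.
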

\begin{proof}
    The proof begins similarly to the proof of Lemma~\ref{lemma:induction-step}, by writing
    \begin{align}
        \E[&m_1(A_1^d, H_1) \mid \underline{A}_1 \in \mathcal{B}_1] \\
        &= \int_{a_1, l_1} m_1(d(a_1, h_1), h_1) d\P(a_1, l_1 \mid \underline{A}_1 \in \underline{\mathcal{B}}) \\
        &= \int_{a_1, l_1} m_1(d(a_1, h_1), h_1) \frac{\P(\underline{A}_1 \in \underline{\mathcal{B}}_1 \mid a_1, l_1)}{\P(\underline{A}_1 \in \underline{\mathcal{B}}_1)}  d\P(a_1, l_1) \qquad \text{(Bayes)} \\
        &= \int_{a_1, l_1} m_1(d(a_1, h_1), h_1) \frac{\1(a_1 \in \mathcal{B}_1) \P(\underline{A}_2 \in \underline{\mathcal{B}}_2 \mid a_1, l_1)}{\P(\underline{A}_1 \in \underline{\mathcal{B}}_1)}  d\P(a_1, l_1) \label{eq:base-case-starting-point}.
    \end{align}

    Next, consider the case of discrete $A_1$:
    \begin{align}
        \eqref{eq:base-case-starting-point} &= \int_{l_1} \sum_{a_1} m_1(d(a_1, h_1), h_1) \frac{\1(a_1 \in \mathcal{B}_1) \P(\underline{A}_2 \in \underline{\mathcal{B}}_2 \mid a_1, l_1)}{\P(\underline{A}_1 \in \underline{\mathcal{B}}_1)}  d\P(a_1, l_1) \\
        &= \int_{l_1} \sum_{a'_1} \sum_{a_1} \1\{ a_1 \in \mathcal{B}_1, a'_1 = d(a_1, h_1) \} m_1(a'_1, h_1) \frac{\1(a_1 \in \mathcal{B}_1) P(\underline{A}_2 \in \underline{\mathcal{B}}_2 \mid a_1, l_1)}{P(\underline{A}_1 \in \underline{\mathcal{B}}_1)}  d\P(a_1, l_1) \\
        &= \int_{l_1} \sum_{a'_1} \sum_{a_1} \1\{ a_1 \in \mathcal{B}_1, a'_1 = d(a_1, h_1) \} m_1(a'_1, h_1) \frac{\1(a_1 \in \mathcal{B}_1) P(\underline{A}_2 \in \underline{\mathcal{B}}_2 \mid a_1, l_1)}{P(\underline{A}_1 \in \underline{\mathcal{B}}_1)} g_1(a_1 \mid l_1) d\P(l_1) \\
        &= \E\left[ \frac{g^d_{1, \mathcal{B}}(A_1 \mid H_1)}{g_1(A_1 \mid H_1)} m_1(A_1, H_1) \right] \\
        &= \E\left\{ \frac{g^d_{1,\mathcal{B}}(A_1 \mid H_1)}{g_1(A_1 \mid H_1)} \E\left[ \frac{g^d_{2, \mathcal{B}}(A_2 \mid H_2)}{g_2(A_2 \mid H_2)} m_2(A_2, H_2) \mid A_1, H_1 \right] \right\}\qquad \text{(by Lemma~\ref{lemma:induction-step})} \\
        &= \E\left[\alpha_2(A_2, H_2) m_2(A_2, H_2) \right]. 
    \end{align}
    For continuous $A_1$, continuing from \eqref{eq:base-case-starting-point}:
    \begin{align}
        \eqref{eq:base-case-starting-point} &= \int_{l_1} \sum_{j=1}^{J_1(h_1)} \int_{a_1 \in \mathcal{I}_{j,1}} m_1(d(a_1, h_1), h_1) \frac{\1(a_1 \in \mathcal{B}_1) \P(\underline{A}_2 \in \underline{\mathcal{B}}_2 | a_1, l_1)}{\P(\underline{A}_1 \in \underline{\mathcal{B}}_1)} d\P(a_1, l_1) \\
        &= \int_{l_1} \sum_{j=1}^{J_1(h_1)} \int_{a_1 \in \mathcal{I}_{j,1}} m_1(d(a_1, h_1), h_1) \frac{\1(a_1 \in \mathcal{B}_1) \P(\underline{A}_2 \in \underline{\mathcal{B}}_2 | a_1, l_1)}{\P(\underline{A}_1 \in \underline{\mathcal{B}}_1)} g_1(a_1 \mid h_1) d\nu(a_1) d\P(l_1) \\
        &= \int_{l_1} \sum_{j=1}^{J_1(h_1)} \int_{a_1'} m_1(a_1', h_1) \frac{\1(b_{j,1}(a'_1, h_1) \in \mathcal{I}_{t,1}(h_1), b_{j, 1}(a'_1, h_1) \in \mathcal{B}_1)\P(\underline{A}_2 \in \underline{\mathcal{B}}_2 \mid b_{j,1}(a'_1, h_1), l_1)}{\P(\underline{A}_1 \in \underline{\mathcal{B}}_1)} \\
        & \qquad \times g_1(b_{j,1}(a'_1, h_1) \mid l_1) \left| b'_{j,1}(a'_1, h_1) \right| d\nu(a_1) d\P(l_1) \\
        &= \E\left[ \frac{g^d_{1, \mathcal{B}}(A_1 \mid H_1}{g_1(A_1 \mid H_1)} m_1(A_1, H_1) \right] \\
        &= \E\left\{ \frac{g^d_{1,\mathcal{B}}(A_1 \mid H_1)}{g_1(A_1 \mid H_1)} \E\left[ \frac{g^d_{2, \mathcal{B}}(A_2 \mid H_2)}{g_2(A_2 \mid H_2)} m_2(A_2, H_2) \mid A_1, H_1 \right] \right\}\qquad \text{(by Lemma~\ref{lemma:induction-step})} \\
        &= \E\left[\alpha_2(A_2, H_2) m_2(A_2, H_2) \right]. 
    \end{align}
\end{proof}

\subsection{Proof of Lemma~\ref{lemma:riesz}}
\begin{proof}
    By Lemma~\ref{lemma:base-case}, we have that
    \begin{align}
        \E\left[ m_1(A_1^d, H_1) | \underline{A}_1 \in \underline{\mathcal{B}}_1 \right] = \E[\alpha_2(A_2, H_2) m_2(A_2, H_2)]. 
    \end{align}
    Repeated application of Lemma~\ref{lemma:induction-step} yields the result, using that at time $t = \tau$,
    \begin{align}
        \E[\alpha_\tau(A_\tau, H_\tau) m_\tau(A_\tau, H_\tau)] = \E[\alpha_\tau(A_\tau, H_\tau) Y]. 
    \end{align}
\end{proof}

\subsection{von-Mises expansion}
\label{appendix:von-mises}
As introduced in the main text, the von-Mises expansion of $\theta$ is given by, for any $\P, \F$ in the model $\mathcal{M}$,
\begin{align}
    \theta(\P)  = \theta(\F) - \E_{\F}\{\D(Z; \P)\} + \Rem(\P, \F).
\end{align}
In this appendix we derive the form of the EIF $\D$ and the second-order term $\Rem$. 

Begin by defining, for any two distributions $\P \in \mathcal{M}$ and $\F \in \mathcal{M}$, the functional
\begin{align}
    m_{t+1,\P,\F}(a_t, h_t) = \E_\P\{b_{t+1}(A_{t+1}, H_{t+1};m_{t+1,\F})\mid A_t=a_t, H_t=h_t,\underline{A}_{t+1} \in \underline{\mathcal{B}}_{t+1}\}.
\end{align}
Then, according to the definition
{\small
\begin{align}
    \Psi_t(m_{t+1})=\E_\P\bigg\{\E_\P\big\{\E_\P[b_{t+1}(A_{t+1}, H_{t+1}; m_{t+1})\mid A_{t}=A_{t}^d, H_{t}, \underline{A}_{t+1} \in \underline{\mathcal{B}}_{t+1}]\mid A_{t-1} = A_{t-1}^d,  H_{t-1}, \underline{A}_{t} \in \underline{\mathcal{B}}_{t}\big\}\cdots\bigg\},
\end{align}
}
we have $\Psi_t(m_{t+1,\F})=\Psi_{t-1}(m_{t, \P,\F})$. Telescoping yields
\[\Psi_0(m_{1,\F}) - \Psi_{\tau}(m_{\tau+1,\F})=-\sum_{t=0}^{\tau}\{\Psi_t(m_{t+1,\P,\F}) - \Psi_t(m_{t+1,\F})\}.\]
Notice that
{\small
\begin{align}
    \label{eq:trans}
    \E_\P[\alpha_t(A_t, H_t)m_{t,\P,\F}(A_t, H_t)] & = \E_\P\{\alpha_t(A_t, H_t)\E_\P[m_{t+1, \F}(A_{t+1}^d, H_{t+1})\mid A_t, H_t, \underline{A}_{t+1} \in \underline{\mathcal{B}}_{t+1}]\}\\
    &=\E_\P\left[\frac{\one\{\underline{A}_{t+1} \in \underline{\mathcal{B}}_{t+1}\} }{G_{t,\P}(A_t, H_t) }\alpha_t(A_t, H_t)b_{t+1}(A_{t+1}, H_{t+1}; m_{t+1,\F})\right],    
\end{align}}%
as well as
{\small
\begin{align}
  \E_\P[\alpha_t(A_t, H_t)m_{t,\F}(A_t, H_t)]=\E_\P\left[\frac{\one\{\underline{A}_{t+1} \in \underline{\mathcal{B}}_{t+1}\} }{G_{t,\P}(A_t, H_t) }\alpha_t(A_t, H_t)m_{t,\F}(A_t, H_t)\right].   
\end{align}}%
Using these two and Lemma~\ref{lemma:riesz} above we have
{\small
\begin{align*}
    \Psi_{t}&(m_{t,\P,\F}) - \Psi_{t}(m_{t,\F})  = \E_\P[\alpha_{t,\P}(A_t, H_t)\{m_{t,\P,\F}(A_t, H_t) - m_{t,\F}(A_t, H_t)\}]\\
    & = \E_\P[\{\alpha_{t,\P}(A_t, H_t)-\alpha_{t,\F}(A_t, H_t)\}\{m_{t,\P,\F}(A_t, H_t) - m_{t,\F}(A_t, H_t)\}]\\
        &+\E_\P\left[\alpha_{t,\F}(A_t, H_t)\left\{\frac{\one\{\underline{A}_{t+1} \in \underline{\mathcal{B}}_{t+1}\} }{G_{t,\P}(A_t, H_t) } - \frac{\one\{\underline{A}_{t+1} \in \underline{\mathcal{B}}_{t+1}\} }{G_{t,\F}(A_t, H_t) }\right\}\{b_{t+1}(A_{t+1}, H_{t+1};m_{t+1, \F}) - m_{t,\F}(A_t, H_t)\}\right]\\
    &+\E_\P\left[\alpha_{t,\F}(A_t, H_t)\frac{\one\{\underline{A}_{t+1} \in \underline{\mathcal{B}}_{t+1}\} }{G_{t,\F}(A_t, H_t) }\{b_{t+1}(A_{t+1}, H_{t+1};m_{t+1, \F}) - m_{t,\F}(A_t, H_t)\}\right].
\end{align*}}%
Define the second-order term
\begin{align*}
    R(\P,\F) =& -\sum_{t=1}^\tau\E_\P[\{\alpha_{t,\P}(A_t, H_t)-\alpha_{t,\F}(A_t, H_t)\}\{m_{t,\P,\F}(A_t, H_t) - m_{t,\F}(A_t, H_t)\}]\\
    &-\sum_{t=1}^\tau\E_\P\left[\alpha_{t,\F}(A_t, H_t)\left\{1 - \frac{G_{t,\P}(A_t, H_t) }{G_{t,\F}(A_t, H_t) }\right\}\{m_{t,\P,\F}(A_t, H_t) - m_{t,\F}(A_t, H_t)\}\right],
\end{align*}
where we note that the terms corresponding to $t=0$ and $t=\tau+1$ vanish since $\alpha_{0,\F}=1$ and $m_{\tau+1,\F}=Y$ for all $\F$ by definition. Notice that $\Psi_0(m_{0,\F})=\theta(\F)$ and $\Psi_{\tau+1}(m_{\tau+1,\F})=\theta(\P)$. Putting the above together yields
\[\theta(\F) - \theta(\P)=-\E_\P[\D(X;\eta_\F)] + R(\P,\F),\]
where
\[\D(X;\eta_\F) = \sum_{t=0}^{\tau}\alpha_{t,\F}(A_t, H_t)\frac{\one\{\underline{A}_{t+1} \in \underline{\mathcal{B}}_{t+1}\} }{G_{t,\P}(A_t, H_t) }\{b_{t+1}(A_{t+1}, H_{t+1};m_{t+1, \F}) - m_{t,\F}(A_t, H_t)\}\]
and $\eta_\F=(\alpha_{t,\F},m_{t,\F},G_{t,\F}:t=0,\ldots, \tau)$.

\subsection{Riesz empirical loss minimization}
\label{appendix:riesz-empirical-loss}
Notice that 
{\small
\begin{align*}
\Psi_t(\tilde\alpha)&=\E\bigg\{\E\big\{\E[b_{t+1}(A_{t+1}, H_{t+1}; \tilde\alpha)\mid A_{t}=A_{t}^d, H_{t}, \underline{A}_{t+1} \in \underline{\mathcal{B}}_{t+1}]\mid A_{t-1} = A_{t-1}^d,  H_{t-1}, \underline{A}_{t} \in \underline{\mathcal{B}}_{t}\big\}\cdots\bigg\}    \\
&=\Psi_{t-1}(q_{t}),
\end{align*}}%
where
\[q_{t}(A_{t}, H_{t}) = \E[b_{t+1}(A_{t+1}, H_{t+1}; \tilde\alpha)\mid A_{t}, H_{t}, \underline{A}_{t+1} \in \underline{\mathcal{B}}_{t+1}].\]
Thus, in light of (\ref{eq:trans})
\begin{align*}
    \Psi_t(\tilde\alpha) &= \E[\alpha_{t-1}(A_{t-1}, H_{t-1}) q_{t-1}(A_{t-1}, H_{t-1})]\\
    &= \E\left[\alpha_{t-1}(A_{t-1}, H_{t-1})\frac{\one\{\underline{A}_{t} \in \underline{\mathcal{B}}_{t}\} }{G_{t-1,\P}(A_{t-1}, H_{t - 1}) } b_t(A_t, H_t; \tilde\alpha)\right].
\end{align*}
Thus, we have

\begin{align}
    \alpha_t =\argmin_{\tilde \alpha}\E\left\{\tilde\alpha(A_t,H_t) - \alpha_{t-1}(A_{t-1}, H_{t-1}) \frac{\one\{\underline{A}_{t} \in \underline{\mathcal{B}}_{t}\} }{G_{t-1,\P}(A_{t-1}, H_{t-1}) }b_t(A_t, H_t; \tilde\alpha)\right\}.
\end{align}

\subsection{Proof of Theorem~\ref{theorem:tmle-weak-convergence}}
Theorem~\ref{theorem:tmle-weak-convergence} can be proven following the same steps as \citealt[Theorem 3]{diaz2023lmtp}. 

\subsection{Riesz representation estimation details}
\label{appendix:nn-details}
We used the \texttt{SuperRiesz} \texttt{R} package \citep{susmann2024superriesz} to implement the Riesz estimation strategy described in Section~\ref{section:estimation}.
For Simulation Study 1 we used the \texttt{glm} learner with all first order interaction terms included. For Simulation Study 2 we used the \texttt{nn} (neural network) learner with 500 training epochs, a learning rate parameter set to $1 \times 10^{-3}$, and 25 neurons in 1 hidden layer.

\subsection{Additional Simulation Results}
\label{appendix:additional-results}
    \begin{longtable}{|rrrrrrrrrr|}
        \hline
        & & & 95\% Coverage & \multicolumn{3}{c}{MAE $\times$ 100} & \multicolumn{3}{c|}{ME $\times$ 100} \\
        $\mathcal{B}^a$ & Scenario & $N$ & TMLE & IPW &  Sub & TMLE & IPW & Sub & TMLE  \\
        \hline
        \{ 0 \} & 1 & 500 & 97.0\% & 10.35 & 4.51 & 4.78 & -9.14 & -2.00 & 0.04\\
         &  & 1000 & 97.5\% & 5.02 & 2.52 & 2.99 & -3.97 & -0.71 & -0.60\\
         &  & 2500 & 94.5\% & 2.79 & 1.75 & 2.24 & -1.65 & -0.25 & -0.30\\
         &  & 5000 & 97.5\% & 1.61 & 1.13 & 1.36 & -0.98 & -0.12 & -0.14\\
         & 2 & 500 & 68.0\% & 22.15 & 4.41 & 4.74 & -22.15 & -2.33 & -2.34\\
         &  & 1000 & 78.5\% & 22.42 & 2.46 & 2.70 & -22.42 & -0.72 & -0.86\\
         &  & 2500 & 73.0\% & 22.19 & 1.71 & 1.80 & -22.19 & -0.13 & -0.21\\
         &  & 5000 & 76.0\% & 22.24 & 1.09 & 1.16 & -22.24 & -0.08 & -0.12\\
         & 3 & 500 & 68.5\% & 13.44 & 22.20 & 7.60 & 2.30 & -22.20 & -3.38\\
         &  & 1000 & 63.0\% & 13.12 & 22.43 & 7.34 & 2.90 & -22.43 & -3.69\\
         &  & 2500 & 53.5\% & 11.55 & 22.20 & 5.68 & 4.88 & -22.20 & -3.36\\
         &  & 5000 & 40.5\% & 8.57 & 22.25 & 5.47 & 3.73 & -22.25 & -4.58\\
         & 4 & 500 & 0.0\% & 22.15 & 22.19 & 22.19 & -22.15 & -22.19 & -22.19\\
         &  & 1000 & 0.0\% & 22.42 & 22.44 & 22.44 & -22.42 & -22.44 & -22.44\\
         &  & 2500 & 0.0\% & 22.19 & 22.20 & 22.20 & -22.19 & -22.20 & -22.20\\
         &  & 5000 & 0.0\% & 22.24 & 22.25 & 22.25 & -22.24 & -22.25 & -22.25\\
        \{ 1 \} & 1 & 500 & 97.5\% & 18.66 & 4.53 & 6.39 & -17.59 & -0.96 & 0.03\\
         &  & 1000 & 97.5\% & 6.32 & 3.07 & 4.32 & -4.23 & -0.01 & 0.66\\
         &  & 2500 & 97.0\% & 3.16 & 1.82 & 2.47 & -2.04 & -0.01 & -0.05\\
         &  & 5000 & 94.0\% & 2.04 & 1.41 & 1.92 & -1.12 & -0.06 & -0.11\\
         & 2 & 500 & 74.5\% & 15.68 & 4.41 & 4.60 & -15.68 & -1.43 & -1.19\\
         &  & 1000 & 76.0\% & 15.95 & 2.91 & 3.23 & -15.95 & -0.09 & -0.23\\
         &  & 2500 & 77.0\% & 15.72 & 1.79 & 1.93 & -15.72 & 0.12 & 0.01\\
         &  & 5000 & 74.5\% & 15.77 & 1.35 & 1.52 & -15.77 & 0.04 & -0.04\\
         & 3 & 500 & 74.5\% & 24.36 & 15.72 & 7.60 & -9.63 & -15.72 & -2.94\\
         &  & 1000 & 76.5\% & 14.42 & 15.97 & 7.34 & -1.74 & -15.97 & -4.42\\
         &  & 2500 & 62.0\% & 14.76 & 15.73 & 6.60 & -2.87 & -15.73 & -4.75\\
         &  & 5000 & 53.5\% & 11.85 & 15.77 & 5.63 & -1.89 & -15.77 & -4.79\\
         & 4 & 500 & 0.0\% & 15.68 & 15.73 & 15.72 & -15.68 & -15.73 & -15.72\\
         &  & 1000 & 0.0\% & 15.95 & 15.96 & 15.96 & -15.95 & -15.96 & -15.96\\
         &  & 2500 & 0.0\% & 15.72 & 15.73 & 15.73 & -15.72 & -15.73 & -15.73\\
         &  & 5000 & 0.0\% & 15.77 & 15.77 & 15.77 & -15.77 & -15.77 & -15.77\\
        \{ 2 \} & 1 & 500 & 100.0\% & 138.62 & 5.15 & 6.35 & -138.62 & -3.57 & -1.61\\
         &  & 1000 & 100.0\% & 44.16 & 3.08 & 4.63 & -44.16 & -0.82 & -0.19\\
         &  & 2500 & 96.0\% & 7.83 & 1.91 & 3.43 & -7.36 & 0.00 & 0.03\\
         &  & 5000 & 97.5\% & 2.77 & 1.37 & 2.37 & -1.24 & -0.18 & -0.31\\
         & 2 & 500 & 75.5\% & 18.26 & 5.65 & 5.37 & -18.26 & -4.55 & -3.64\\
         &  & 1000 & 84.0\% & 18.52 & 3.18 & 3.37 & -18.52 & -1.23 & -0.90\\
         &  & 2500 & 85.0\% & 18.30 & 1.80 & 2.12 & -18.30 & 0.03 & -0.13\\
         &  & 5000 & 80.5\% & 18.35 & 1.35 & 1.47 & -18.35 & -0.15 & -0.12\\
         & 3 & 500 & 97.0\% & 65.12 & 18.28 & 12.87 & -64.95 & -18.28 & -11.30\\
         &  & 1000 & 89.0\% & 43.51 & 18.53 & 12.02 & -42.94 & -18.53 & -11.53\\
         &  & 2500 & 34.5\% & 32.11 & 18.30 & 12.24 & -31.91 & -18.30 & -12.12\\
         &  & 5000 & 3.0\% & 26.31 & 18.35 & 12.41 & -26.31 & -18.35 & -12.41\\
         & 4 & 500 & 0.0\% & 18.26 & 18.27 & 18.28 & -18.26 & -18.27 & -18.28\\
         &  & 1000 & 0.0\% & 18.52 & 18.53 & 18.53 & -18.52 & -18.53 & -18.53\\
         &  & 2500 & 0.0\% & 18.30 & 18.30 & 18.30 & -18.30 & -18.30 & -18.30\\
         &  & 5000 & 0.0\% & 18.35 & 18.35 & 18.35 & -18.35 & -18.35 & -18.35\\
        $\{ 3 \}$ & 1 & 500 & 100.0\% & 203.60 & 4.27 & 6.65 & -203.51 & -2.13 & -0.71\\
         &  & 1000 & 100.0\% & 71.01 & 3.99 & 6.18 & -70.67 & -0.90 & -1.04\\
         &  & 2500 & 97.5\% & 17.98 & 2.32 & 4.09 & -17.69 & 0.40 & 0.41\\
         &  & 5000 & 94.5\% & 5.27 & 1.68 & 3.00 & -3.77 & 0.28 & 0.00\\
         & 2 & 500 & 88.0\% & 10.11 & 4.70 & 4.33 & -10.11 & -3.49 & -1.82\\
         &  & 1000 & 76.5\% & 10.37 & 3.84 & 4.50 & -10.37 & -1.32 & -0.81\\
         &  & 2500 & 72.0\% & 10.15 & 2.12 & 2.50 & -10.15 & 0.42 & 0.42\\
         &  & 5000 & 70.0\% & 10.20 & 1.57 & 1.80 & -10.20 & 0.19 & 0.31\\
         & 3 & 500 & 98.0\% & 62.81 & 10.10 & 10.90 & -62.81 & -10.10 & -9.30\\
         &  & 1000 & 90.5\% & 50.56 & 10.37 & 10.97 & -50.49 & -10.37 & -10.33\\
         &  & 2500 & 77.0\% & 26.11 & 10.15 & 8.50 & -26.11 & -10.15 & -8.40\\
         &  & 5000 & 48.5\% & 19.13 & 10.20 & 7.04 & -19.13 & -10.20 & -7.03\\
         & 4 & 500 & 0.5\% & 10.11 & 10.13 & 10.10 & -10.11 & -10.13 & -10.10\\
         &  & 1000 & 0.0\% & 10.37 & 10.38 & 10.37 & -10.37 & -10.38 & -10.37\\
         &  & 2500 & 0.0\% & 10.15 & 10.15 & 10.15 & -10.15 & -10.15 & -10.15\\
         &  & 5000 & 0.0\% & 10.20 & 10.20 & 10.20 & -10.20 & -10.20 & -10.20\\
        $\{ 4 \}$ & 1 & 500 & 100.0\% & 288.70 & 4.39 & 6.83 & -287.92 & -1.30 & -0.95\\
         &  & 1000 & 99.0\% & 98.38 & 3.95 & 6.16 & -97.77 & -0.61 & -0.43\\
         &  & 2500 & 95.0\% & 20.27 & 3.03 & 4.70 & -17.83 & -0.53 & 0.06\\
         &  & 5000 & 95.0\% & 5.60 & 1.84 & 2.92 & -0.36 & -0.29 & -0.53\\
         & 2 & 500 & 83.0\% & 2.20 & 4.69 & 4.61 & -1.50 & -2.86 & -0.70\\
         &  & 1000 & 68.0\% & 2.03 & 3.62 & 4.49 & -1.76 & -1.26 & -0.42\\
         &  & 2500 & 61.5\% & 1.57 & 2.74 & 3.19 & -1.53 & -0.54 & -0.52\\
         &  & 5000 & 60.5\% & 1.58 & 1.54 & 1.89 & -1.58 & -0.40 & -0.29\\
         & 3 & 500 & 97.5\% & 46.00 & 2.19 & 8.92 & -45.59 & -1.49 & -4.11\\
         &  & 1000 & 94.5\% & 18.07 & 2.03 & 7.30 & -16.47 & -1.75 & -3.15\\
         &  & 2500 & 87.0\% & 7.18 & 1.56 & 5.26 & 0.76 & -1.53 & -2.56\\
         &  & 5000 & 86.0\% & 8.11 & 1.58 & 3.28 & 7.50 & -1.58 & -0.48\\
         & 4 & 500 & 92.5\% & 2.20 & 2.19 & 2.18 & -1.50 & -1.48 & -1.47\\
         &  & 1000 & 80.0\% & 2.03 & 2.02 & 2.02 & -1.76 & -1.74 & -1.74\\
         &  & 2500 & 69.5\% & 1.57 & 1.56 & 1.56 & -1.53 & -1.53 & -1.53\\
         &  & 5000 & 38.5\% & 1.58 & 1.58 & 1.58 & -1.58 & -1.58 & -1.58\\
        $\{ 5 \}$ & 1 & 500 & 99.0\% & 207.72 & 4.20 & 7.40 & -205.55 & 2.00 & 0.03\\
         &  & 1000 & 97.0\% & 76.31 & 3.62 & 6.49 & -71.48 & 0.33 & 0.32\\
         &  & 2500 & 93.5\% & 13.88 & 2.40 & 5.00 & -5.56 & 0.74 & 0.99\\
         &  & 5000 & 92.5\% & 8.28 & 1.77 & 2.75 & 7.39 & 0.55 & 0.83\\
         & 2 & 500 & 74.0\% & 10.37 & 4.25 & 4.92 & 10.37 & 0.99 & 3.11\\
         &  & 1000 & 72.5\% & 10.11 & 3.73 & 3.87 & 10.11 & 0.08 & 0.83\\
         &  & 2500 & 66.0\% & 10.34 & 2.25 & 2.41 & 10.34 & 0.61 & 0.74\\
         &  & 5000 & 59.5\% & 10.29 & 1.74 & 1.77 & 10.29 & 0.71 & 0.62\\
         & 3 & 500 & 97.5\% & 24.82 & 10.42 & 9.07 & -19.95 & 10.42 & 1.34\\
         &  & 1000 & 84.5\% & 14.76 & 10.14 & 7.40 & 5.07 & 10.14 & 0.70\\
         &  & 2500 & 71.0\% & 21.05 & 10.35 & 4.59 & 19.84 & 10.35 & 1.37\\
         &  & 5000 & 72.5\% & 21.70 & 10.29 & 2.90 & 21.70 & 10.29 & 1.73\\
         & 4 & 500 & 0.0\% & 10.37 & 10.41 & 10.42 & 10.37 & 10.41 & 10.42\\
         &  & 1000 & 0.0\% & 10.11 & 10.14 & 10.13 & 10.11 & 10.14 & 10.13\\
         &  & 2500 & 0.0\% & 10.34 & 10.35 & 10.35 & 10.34 & 10.35 & 10.35\\
         &  & 5000 & 0.0\% & 10.29 & 10.29 & 10.29 & 10.29 & 10.29 & 10.29\\
        \hline
    \caption{Results of Simulation Study 1 showing empirical coverage of the 95\% confidence intervals, Mean Absolute Error (MAE), and Mean Error (ME) for the inverse probability weighted estimator (IPW), substitution estimator (Sub), and Targeted minimum loss-based estimator (TMLE).}
    \label{tab:simulation-results-1-appendix}
\end{longtable}
\fi

\bibliography{refs}

\end{document}